\title{Simple, strict, proper, happy: \\ A study of reachability in temporal graphs\thanks{This work has been partially funded by the ANR project TEMPOGRAL (ANR-22-CE48-0001).}
}
\author{Arnaud Casteigts\thanks{University of Geneva, Switzerland.},~
        Timothée Corsini\thanks{LaBRI, CNRS, Univ. Bordeaux, Bordeaux INP, France.},
        Writika Sarkar\thanks{Chennai Mathematical Institute, India}}
\date{}
\newcommand{\G}{\ensuremath{\mathcal{G}}}
\newcommand{\closure}{\ensuremath{\mathcal{R}}}
\newtheorem{open}{Open question}
\newtheorem{lemma}{Lemma}
\newtheorem{claim}{Claim}
\newtheorem{theorem}{Theorem}
\newtheorem{corollary}{Corollary}
\newtheorem{definition}{Definition}
\newcommand{\ep}{\epsilon}
\newcommand{\minedge}{\textsc{Min-Edge Spanner}}
\newcommand{\minlabel}{\textsc{Min-Label Spanner}}
\definecolor{bluish}{RGB}{20, 22, 128}
\definecolor{prussian}{RGB}{0, 49, 83}
\definecolor{chocolate}{RGB}{101, 67, 33}
\definecolor{browny}{RGB}{150, 75, 0}
\definecolor{peach}{RGB}{255, 203, 164}
\definecolor{ocre}{RGB}{204, 119, 34}
\begin{document}
\maketitle

\begin{abstract}
  Dynamic networks are a complex subject. Not only do they inherit the
  complexity of static networks (as a particular case); they are also
  sensitive to definitional subtleties that are a frequent source
  of confusion and incomparability of results in the literature.

  In this paper, we take a step back and examine three such aspects in
  more details, exploring their impact in a systematic way;
  namely, whether the temporal paths are required to be \emph{strict} (i.e.,
  the times along a path must increasing, not just be non-decreasing),
  whether the time labeling is \emph{proper} (two adjacent edges
  cannot be present at the same time) and whether the time labeling is
  \emph{simple} (an edge can have only one presence time).
  In particular, we investigate how different combinations of these features impact
  the expressivity
  of the graph in terms of reachability.

  Our results imply a hierarchy of expressivity for the resulting
  settings, shedding light on the loss of generality that one is
  making when considering either combination. Some settings are more
  general than expected; in particular, proper temporal graphs turn
  out to be as expressive as general temporal graphs where non-strict
  paths are allowed. Also, we show that the simplest setting, that of
  \emph{happy} temporal graphs (i.e., both proper and simple) remains
  expressive enough to emulate the reachability of general temporal
  graphs in a certain (restricted but useful) sense. Furthermore, this
  setting is advocated as a target of choice for proving negative
  results. We illustrates this by strengthening two known results to
  happy graphs (namely, the inexistence of sparse spanners, and the
  hardness of computing temporal components). Overall, we hope that
  this article can be
  seen as a guide for choosing between different settings of temporal
  graphs, while being aware of the way these choices affect generality.
  \end{abstract}

  \textbf{Keywords:} {\small Temporal graphs; Temporal reachability; Reachability graph; Expressivity.}

\section{Introduction}
In the context of this paper, a temporal graph is a labeled graph
$\G=(V,E,\lambda)$ where $V$ is a finite set of vertices,
$E \subseteq V \times V$ a set of undirected edges, and
$\lambda : E \to 2^{\mathbb{N}}$ a function assigning at least one
time label to every edge, interpreted as presence times. These graphs
can model various phenomena, ranging from dynamic networks -- networks
whose structure changes over the time -- to dynamic interactions over
static (or dynamic) networks. These graphs have found applications in biology,
transportation, social networks, robotics, scheduling, distributed
computing, and self-stabilization, to name a few. Although more
complex formalisms have been defined and extensively studied (see
e.g.~\cite{CFQS12} or \cite{OR91}), several features of temporal
graphs remain not well understood even in the most restricted settings.

A fundamental aspect of temporal graphs is reachability,
commonly characterized in
terms of the existence of temporal paths; i.e., path which traverses
edges in chronological order. There has been a large number of studies
related to temporal reachability in the past two decades, seen from various
perspectives, e.g. $k$-connectivity and
separators~\cite{KKK02,FMN+20,CCMP20},
components~\cite{BF03,AF16,AGMS17,RML20,CLMS23,SSNP23}, feasibility of distributed
tasks~\cite{CFQS12,GCLL15,ADD+21,BT15}, schedule design~\cite{BCV21},
data structures~\cite{BFJ03,WDCG12,RML20,BACT22},
reachability minimization~\cite{EMMZ19}, reachability with additional
constraints~\cite{BFJ03,CHMZ21}, temporal spanners~\cite{AF16, AGMS17,
  CPS21, BDG+22}, path enumeration~\cite{EMM22}, random
graphs~\cite{BCF11,CRRZ21}, exploration~\cite{IKW14,DDFS20,ES22}, cops
and robbers~\cite{MW21,BLJ+23}, and
temporal flows~\cite{ACG+19,VDPS21}, to name a few (many more exist).
Over the course of these studies, it has become clear that temporal
connectivity differs significantly from classical reachability in
static graphs. To start with, it is not transitive, which implies that
two temporal paths (also called journeys) are not, in general,
composable, and consequently, connected components do not form
equivalence classes. This explains, in part, why many tractable
problems in static graphs become hard when transposed to temporal
graphs. Further complications arise, such as the conceptual impact of
having an edge appearing multiple times, and that of having adjacent
edges appearing at the same time. These aspects, while
innocent-looking, have a deep impact on the answers to many structural
and algorithmic questions.

In this paper, we take a step back, and examine the impact of such
aspects; in particular \emph{strictness} (should the times along a
path increase or only be non-decreasing?), \emph{properness} (can two
adjacent edges appear at the same time?) and \emph{simpleness} (do the
edges appear only once or several times?). We look at the impact of
these aspects from the point of view of temporal reachability, and
more precisely, how they restrict it. The central tool is the notion
of reachability graph, defined as the static directed graph where an
arc exists if and only if a temporal path exists in the original temporal
graph.\footnote{This concept was called the \emph{transitive
    closure of journeys} in~\cite{BF03,BACT22,CKNP19}; we now avoid
  this term because reachability is not transitive, which makes it
  somewhat misleading.} It turns out
that the above aspects have a strong impact on the kind of
reachability graph one can obtain from a temporal graph. Precisely, we
establish four separations between various combinations (called
\emph{settings}) of the above parameters. On the other hand, we also
present three reachability-preserving transformation between settings,
which show that certain settings are at least as expressive as others.

By combining the separations and transformations together with
arguments of containment, we obtain an almost complete hierarchy of
expressivity of these settings in terms of reachability. This
hierarchy clarifies the extent to which the choice of a particular
setting impacts generality, and as such, can be used as a guide for
future research in temporal graphs. Indeed, the above three aspects
(strictness, properness, simpleness) are a frequent source of
confusion and of incomparability of results in the literature.
Furthermore, many basic questions remain unresolved even in the most
restricted setting. For this reason, and somewhat paradoxically, we
advocate the study of the simplest model, that of \emph{happy}
temporal graphs (i.e., both proper and simple), where all the above
subtleties vanish. Another reason is that, despite being the least
expressive setting, happy graphs remain general enough to capture
\emph{certain features} of general temporal reachability. Finally,
negative results in this setting are \textit{de facto} stronger than
in all the other settings. In guise of illustration, we strengthen two
existing negative results to the happy setting. Namely, finding
temporal components of a given size remains difficult even in happy graphs
and the existence of
$o(n^2)$-sparse temporal spanners is also not guaranteed even in happy
graphs. Both results were initially obtained in more
general settings (respectively, in the non-proper, non-simple,
non-strict setting~\cite{BF03} for the former, and in the non-proper, simple,
non-strict setting~\cite{AF16} for the latter).

The paper is organized as follows. In Section~\ref{sec:definitions},
we give some definitions and argue that the above aspects deserve to
be studied for their own sake. In Section~\ref{sec:expressivity}, we
present the four separations and the three transformations, together
with the resulting hierarchy. In Section~\ref{sec:happy}, we
strengthen the hardness of temporal components and the counter-example
for sparse spanners to
the setting of happy graphs, and motivate their study further. Finally,
we conclude in Section~\ref{sec:conclusion} with some remarks.

\section{Temporal Graphs}
\label{sec:definitions}

Given a temporal graph $\G=(V,E,\lambda)$, the
static graph $G=(V,E)$ is called the \emph{footprint} of~$\G$.
Similarly, the static graph $G_t=(V,E_t)$ where
$E_t=\{e \in E \mid t \in \lambda(e)\}$ is the \emph{snapshot} of $\G$ at
time $t$. A pair $(e, t)$ such that $e \in E$ and $t \in \lambda(e)$
is a \emph{contact} (or temporal edge). The range of $\lambda$ is
called the \emph{lifetime} of $\G$, of length $\tau$. A \emph{temporal path} (or
journey) is a sequence of contacts $\langle(e_i, t_i)\rangle$ such
that $\langle e_i \rangle$ is a path in the footprint and $\langle t_i \rangle$
is non-decreasing.

The reachability relation based on temporal paths can be
captured by a \emph{reachability graph},
i.e. a static directed graph $\closure(\G)=(V, E_c)$, such that
$(u,v) \in E_c$ if and only if a temporal path exists from $u$ to $v$. A graph
$\G$ is \emph{temporally connected} if all the vertices can reach each
other at least once (i.e., $\closure(\G)$ is a complete directed graph).
The class of temporally connected graphs ($TC$)
is arguably one of the most basic classes of temporal graphs, along
with its infinite lifetime analog $TC^R$, where temporal
connectivity is achieved infinitely often (i.e., recurrently).

In what follows, we drop the adjective ``temporal'' whenever it is
clear from the context that the considered graph (or property) is temporal.

\subsection{Strictness / Properness / Simpleness}

The above definitions can be restricted in various ways. In
particular, one can identify three restrictions that are common in
the literature, although they are sometimes considered implicitly
and under various names:

\begin{itemize}
\item \textit{Strictness}: A temporal path $\langle(e_i, t_i)\rangle$
  is \emph{strict} if $\langle t_i\rangle$ is increasing.
\item \textit{Properness}: A temporal graph is \emph{proper} if
  $\lambda(e) \cap \lambda(e') = \emptyset$ whenever $e$ and $e'$ are
  incident to a same vertex (i.e., $\lambda$ is locally-injective).
\item \textit{Simpleness}: A temporal graph is \emph{simple} if
  $\lambda$ is single-valued; that is, every edge has a single
  presence time.
\end{itemize}

\emph{Strictness} is perhaps the easiest way of accounting for
traversal time for the edges. Without such restriction (i.e.,
in the default \emph{non-strict} setting), a journey can traverse arbitrarily
many edges at the same time step.
The notion of \emph{properness} is related to the one of strictness,
although not equivalent. Properness forces all the
journeys to be strict, because adjacent edges always have different
time labels. However, if the graph is non-proper, then considering strict
or non-strict journeys does have an impact, thus distinguishing both concepts
is important.
We will call \emph{happy} a graph that is both proper and simple, for
reasons that will become clear later.

Application-wise, proper temporal graphs arise naturally when
the graph represent mutually
exclusive interactions. Proper graphs also have the advantage that
$\lambda$ induces a proper coloring of the contacts (interpreting the
labels as colors). Finally, \emph{simpleness} naturally accounts for
scenarios where the entities interact only one time. It
is somewhat unlikely that a real-world system has this property;
however, this restriction has been extensively considered in
well studied subjects (e.g. in gossip theory).

Note that simpleness and properness are properties of the
\emph{graph}, whereas strictness is a property of the
\emph{temporal paths} in that graph. Therefore, one may either consider a strict or a
non-strict setting in a same temporal graph. The three notions (of
strictness, simpleness, and properness) interact in subtle ways, these
interactions being a frequent source of confusion and incomparability
among results. Before focusing on these interactions, let us make a
list of the possible combinations. The naive cartesian product of
these restrictions leads to eight combinations. However, not all of
them are meaningful, since properness removes the distinction between
strict and non-strict journeys. Overall, there are six
meaningful combinations, illustrated in Figure~\ref{fig:diagram}.

\begin{figure}[h]
  \begin{minipage}{6.5cm}
    {\small
\begin{itemize}
\item Non-proper, non-simple, strict (1)
\item Non-proper, non-simple, non-strict (2)
\item Non-proper, simple, strict (4)
\item Non-proper, simple, non-strict (5)
\item Proper, non-simple (3)
\item Proper, simple (= happy) (6)
\end{itemize}
}
\end{minipage}
\begin{minipage}{5cm}
  \centering
\colorlet{circle edge}{blue!50}
\colorlet{circle area}{blue!20}

\tikzset{
  filled/.style={fill = circle area, draw=circle edge, thick},
  outline/.style={draw=circle edge, thick}
}

\begin{tikzpicture}[scale=.5]

  \draw (0,0) circle[radius=3cm ]
  (0:4cm) circle[radius=3cm];
  \tikzstyle{every node}=[font=\footnotesize]
  \node [draw,fit=(current bounding box),inner sep=7mm] (frame){} ;

\node[text width=1cm, align=center] at (0:2cm) {Happy\\(6)};
\node[text width=1cm, align=center] at (-.5, 1) {Simple\\(4)};
\node[text width=1cm, align=center] at (-.5, -1) {(5)};
\node[text width=1cm, align=center] at (4.5, 1) {Proper\\(3)};
\node[text width=1cm] at (-2.7, 3) {Strict\\(1)};
\node[text width=1.4cm] at (-2.5,-3) {(2)\\Non-strict};

\draw (-4,0) -- (-3,0);
\draw[dashed] (-3,0) -- (1,0);
\draw (7,0) -- (8,0);
\end{tikzpicture}
  \end{minipage}
\caption{\label{fig:diagram}Settings resulting from combining the three properties.}
\end{figure}
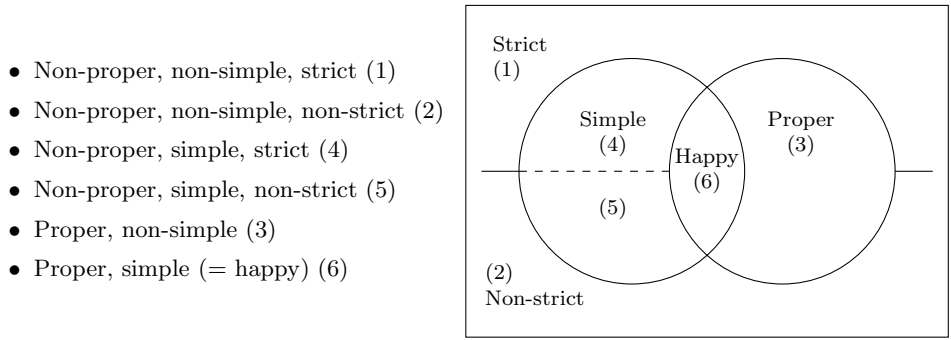

In the name of the settings, ``non-proper'' refers to the fact that properness is \emph{not required}, not to the fact that it is necessarily not satisfied. In other words, proper graphs are a particular case of non-proper graphs, and likewise, simple graphs are a particular case of non-simple graphs.
Thus, whenever non-proper or non-simple graphs are considered, we will omit this information from the name. For instance, setting (1) will be referred to as the (general) strict setting. Finally, observe that strict paths are a special case of non-strict paths, but we do not get an inclusion of the corresponding settings, whose features are actually incomparable (we shall return on that subtle point later).

\subsection{Does it really matter? (Example of spanners)}

While innocent-looking, the choice for a particular setting may have
tremendous impacts on the answers to basic questions. For illustration, consider the
spanner problem. Given a graph $\G=(V,E,\lambda)$ such that
$\G \in TC$, a \emph{temporal spanner} of $\G$ is a graph
$\G'=(V,E',\lambda')$ such that $\G' \in TC$, $E' \subseteq E$, and
for all $e$ in $E'$, $\lambda'(e) \subseteq \lambda(e)$. In other
words, $\G'$ is a temporally connected spanning subgraph of $\G$.
A natural goal is to minimize the size of the spanner, either in terms of
number of labels or number of underlying edges. More formally,

\newenvironment{problem}[1]
{ \begin{mdframed}[roundcorner=2pt, linecolor=prussian!100, linewidth=1pt, backgroundcolor=prussian!10] \interlinepenalty=10000\color{prussian} #1 \\}
  {  \end{mdframed}}

\begin{problem}{\minlabel}
  Input: A temporal graph $\G$, an integer $k$\\
  Output: Does $\G$ admit a temporal spanner with at most $k$ contacts?
\end{problem}

\begin{problem}{\minedge}
  Input: A temporal graph $\G$, an integer $k$\\
  Output: Does $\G$ admit a temporal spanner of at most $k$ edges (keeping all their labels)?
\end{problem}
\medskip

The search and optimization versions of these problems can be defined
analogously. Unlike spanners in static graphs, the
definition does not care about stretch factors, due to the fact that
the very existence of small spanners is not guaranteed.
In the following, we illustrate the impact of the
notions of strictness, simpleness, and properness (and their
interactions) on these questions. The impact of strictness is pretty
straightforward. Consider the graph $\G_1$ on
Figure~\ref{fig:spanners}. If non-strict journeys are allowed, then
this graph admits $\G_2$ as a spanner (among others), this spanner being optimal for
both versions of the problem (3 labels, 3 edges). Otherwise, the
minimum spanners are bigger (and different) for both versions: $\G_3$
minimizes the number of labels (4 labels, 4 edges), while $\G_4$
minimizes the number of edges (3 edges, 5 labels). If strictness is
combined with non-properness, then there exist a pathological scenario
(already identified in~\cite{KKK02} and~\cite{AGMS17}) where the input is a complete
temporal graph (see $\G_5$, for example) but none of the edges
can be removed without breaking connectivity! Note that
$\G_5$ is a simple temporal graph. Simpleness has further consequences.
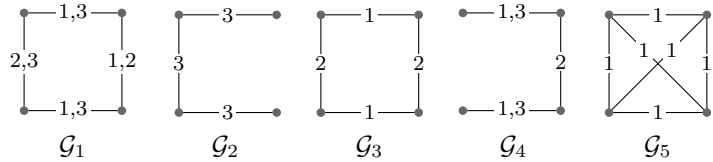
\begin{figure}[ht]
  \centering
  \tikzset{defnode/.style={draw, circle, fill, darkgray!80, inner sep=1pt}}
  \begin{tabular}{ccccc}
\begin{tikzpicture}[scale=1.3]
  \tikzstyle{every node}=[defnode]
  \path (0,1) node (a){};
  \path (1,1) node (b){};
  \path (1,0) node (c){};
  \path (0,0) node (d){};

  \tikzstyle{every node}=[font=\footnotesize,fill=white,inner sep=1pt]
  \draw (a) -- node{1,3} (b);
  \draw (a) -- node{2,3} (d);
  \draw (b) -- node{1,2} (c);
  \draw (c) -- node{1,3} (d);
\end{tikzpicture}&
\begin{tikzpicture}[scale=1.3]
  \tikzstyle{every node}=[defnode]
  \path (0,1) node (a){};
  \path (1,1) node (b){};
  \path (1,0) node (c){};
  \path (0,0) node (d){};

  \tikzstyle{every node}=[font=\footnotesize,fill=white,inner sep=1pt]
  \draw (a) -- node{3} (b);
  \draw (a) -- node{3} (d);
  \draw (c) -- node{3} (d);
\end{tikzpicture}&
\begin{tikzpicture}[scale=1.3]
  \tikzstyle{every node}=[defnode]
  \path (0,1) node (a){};
  \path (1,1) node (b){};
  \path (1,0) node (c){};
  \path (0,0) node (d){};

  \tikzstyle{every node}=[font=\footnotesize,fill=white,inner sep=1pt]
  \draw (a) -- node{1} (b);
  \draw (a) -- node{2} (d);
  \draw (b) -- node{2} (c);
  \draw (c) -- node{1} (d);
\end{tikzpicture}&
\begin{tikzpicture}[scale=1.3]
  \tikzstyle{every node}=[defnode]
  \path (0,1) node (a){};
  \path (1,1) node (b){};
  \path (1,0) node (c){};
  \path (0,0) node (d){};

  \tikzstyle{every node}=[font=\footnotesize,fill=white,inner sep=1pt]
  \draw (a) -- node{1,3} (b);
  \draw (b) -- node{2} (c);
  \draw (c) -- node{1,3} (d);
\end{tikzpicture}&
\begin{tikzpicture}[scale=1.3]
  \tikzstyle{every node}=[defnode]
  \path (0,1) node (a){};
  \path (1,1) node (b){};
  \path (1,0) node (c){};
  \path (0,0) node (d){};

  \tikzstyle{every node}=[font=\footnotesize,fill=white,inner sep=1pt]
  \draw (a) -- node{1} (b);
  \draw (a) -- node{1} (d);
  \draw (b) -- node{1} (c);
  \draw (c) -- node{1} (d);
  \draw (b) -- node[pos=.35]{1} (d);
  \draw (a) -- node[pos=.35]{1} (c);
\end{tikzpicture}\\
    $\G_1$&$\G_2$&$\G_3$&$\G_4$&$\G_5$\\
  \end{tabular}
\caption{\label{fig:spanners}Some temporal graphs on four vertices.}
\end{figure}
For example, if the input graph is simple and proper, then it cannot
admit a spanning tree (i.e. a spanner of $n-1$ edges) and requires at
least $2n-4$ edges (or labels, equivalently, since the graph is
simple)~\cite{B79}. If the input graph is simple and non-proper, then
it does not admit a spanning tree if strictness is required, but it
does admit one otherwise if and only if at least one of the snapshots
is connected (in a classical sense). Finally, none of these affirmations hold in
general for non-simple graphs.

If the above discussion seems confusing to the reader, it is not
because we obfuscated it. The situation is intrinsically subtle. In
particular, one should bear in mind the above subtleties whenever
results from different settings are compared with each other. To
illustrate such pitfalls, let us relate a recent mistake (fortunately,
without consequences) that involved one of the authors.
In~\cite{AF16}, Axiotis and Fotakis constructed a (non-trivial)
infinite family of temporal graphs which do not admit $o(n^2)$-sparse
spanners. Their construction is given in the setting of simple
temporal graphs, with non-proper labeling and non-strict journeys
allowed. The same paper actually uses many constructions formulated in
this setting, and a general claim is that these constructions can be
adapted to proper graphs (and so strict journeys). Somewhat hastily,
the introductions of~\cite{CRRZ21} and~\cite{CPS21} claim that the
counterexample from~\cite{AF16} holds in happy graphs. The pitfall is that, for
some of the constructions in~\cite{AF16}, giving up on non-properness
(and non-strictness) is only achievable at the cost of using
\emph{multiple} labels per edge -- a conclusion that we reached
in the meantime. To be fair, the authors of~\cite{AF16} never
claimed that these adaptations could preserve simpleness, so their
claim was actually correct.

Apart from illustrating the inherent subtleties of these notions, the
previous observations imply that counter-examples to sparse spanners
in happy graphs was in fact still open. In
Section~\ref{sec:counterexample}, we show that the spanner
construction from~\cite{AF16} can indeed be adaptated to this very
restricted setting.

\subsection{Happy Temporal Graphs}

A temporal graph $\G=(V,E,\lambda)$ is \emph{happy} if it is both
proper and simple. These graphs have sometimes been referred to as
\emph{simple temporal graphs} (including by the authors), which the
present paper now argues is insufficiently precise. Happy graphs are
``happy'' for a number of reasons. First, the distinction between
strict journeys and non-strict journeys can be safely ignored (due to
properness), and the distinction between contacts and edges can also
be ignored (due to simpleness). Clearly, these restrictions come with
a loss of expressivity, but this does not prevent happy graphs from
being relevant more generally in the sense that negative results in
these graphs carry on to all the other settings. For
example, if a problem is computationally hard
on happy instances, then it is so in all the other settings. Thus, it seems
like a good
practice to try to prove negative results for happy graphs first, whenever
possible. If this is not possible, then proving it in proper graphs
still has the advantage of making it applicable to both strict
and non-strict temporal paths alike.
Positive results, on the other hand, are not generally
transferable; in particular, a hard problem in general temporal graphs
could become tractable in happy graphs. This being said, if a certain
graph contains a happy subgraph, then whatever pattern can be found in
the latter also exists in the former, which enables \emph{some form}
of transferability for positive results as well from happy graphs
to more general temporal graphs.

In fact, happy graphs coincide with a vast body of literature. Many
studies in \emph{gossip theory} and \emph{population protocols}
consider the same restrictions, and the so-called \emph{edge-ordered
  graphs}~\cite{CK71} can also be seen as a particular case of happy
graphs where $\lambda$ is \emph{globally} injective (although the
distinction does not matter for reachability). In addition, a number
of other existing results in temporal graphs consider such restrictions.

Finally, a nice property of happy graph is that, up to time-distortion that preserve the local ordering of the
edges, the number of happy graphs on a certain
number of vertices is \emph{finite} -- a crucial
property for exhaustive search and verification (note that this is also the case
of simple graphs, more generally).

We think that the above arguments, together with the fact that many basic questions
remain unsolved even in this restricted model, makes happy graphs a
compelling class of temporal graphs to be studied in the current state
of knowledge.

\section{Expressivity of the settings in terms of reachability}
\label{sec:expressivity}

As already said, a fundamental aspect of temporal graphs is
reachability through temporal paths. There are
several ways of characterizing the extent to which two temporal graphs
$\G_1$ and $\G_2$ have similar reachability. The first three, below,
are increasingly more restrictive.

\begin{definition}[Reachability equivalence]
  Let $\G_1$ and $\G_2$ be two temporal graphs built on the same set of
  vertices. $\G_1$ and $\G_2$ are
  \emph{reachability-equivalent} if $\closure(\G_1) \simeq \closure(\G_2)$
  (i.e. both reachability graphs are isomorphic). By abuse of language, we
  say that $\G_1$ and $\G_2$ have the ``same'' reachability graph.
\end{definition}

\begin{definition}[Support equivalence]
  Let $\G_1$ and $\G_2$ be two temporal graphs built on the same set of
  vertices. These graphs are \emph{support-equivalent} if for
  every journey in either graph, there exists a journey in the other
  graph whose underlying path goes through the same sequence of
  vertices.
\end{definition}

\begin{definition}[Bijective equivalence]
  Let $\G_1$ and $\G_2$ be two temporal graphs built on a same set of
  vertices. These graphs are \emph{bijectively equivalent} if
  there is a bijection $\sigma$ between the set of journeys of
  $\G_1$ and that of $\G_2$, and $\sigma$ is
  support-preserving.
\end{definition}

The following form of equivalence is weaker.

\begin{definition}[Induced reachability equivalence]
  Let $\G_1$ and $\G_2$ be two temporal graphs built on vertices $V_1$
  and $V_2$, respectively, with $V_1 \subseteq V_2$. $\G_2$ is
  \emph{induced-reachability equivalent} to $\G_1$ if
  $\closure(\G_2)[V_1] \simeq \closure(\G_1)$. In other words, the
  restriction of $\closure(\G_2)$ to the vertices of $V_1$ is
  isomorphic to $\closure(\G_1)$.
\end{definition}

Observe that
bijective equivalence implies support equivalence, which implies
reachability equivalence, which implies induced reachability equivalence.
Furthermore, support equivalence forces both footprints
to be the same (the converse is not true).
In this section, we show that some of the settings differ in terms of reachability, whereas others
coincide. We first prove a number of
\emph{separations}, by showing that there exist temporal graphs in
some setting, whose reachability graph cannot be realized in some
other settings (Section~\ref{sec:separations}). Then, we present three
\emph{transformations} which establish various levels of equivalences
 (Section~\ref{sec:transformations}).
Finally, we infer more relations
by combining separations and transformations in
Section~\ref{sec:summary}, together with further discussions. A
complete diagram illustrating all the relations is given in the end of
the section (Figure~\vref{fig:summary}).

\subsection{Separations}
\label{sec:separations}

In view of the above discussion, a separation in terms of reachability
graphs is pretty general, as it implies a separation
for the two stronger forms of equivalences (support-preserving and
bijective ones). Before starting, let us state a simple lemma
used in several of the subsequent proofs.

\begin{lemma}
  \label{lem:distance-3}
  In the non-strict setting, if two vertices are at distance
  two in the footprint, then at least one of them can reach the other
  (i.e. the reachability graph must have at least one arc between these
  vertices).
\end{lemma}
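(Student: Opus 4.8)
The plan is to consider two vertices $u$ and $w$ at distance two in the footprint, with a common neighbor $v$, and argue that a non-strict journey must exist in at least one direction between $u$ and $w$. Since $v$ is adjacent to both $u$ and $w$, there is a contact $(uv, t_1)$ for some $t_1 \in \lambda(uv)$ and a contact $(vw, t_2)$ for some $t_2 \in \lambda(vw)$. I would pick, say, the smallest available label on each of the two edges, or more flexibly just any labels $t_1, t_2$, and then compare $t_1$ and $t_2$.

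The key case analysis is a simple dichotomy on the order of $t_1$ and $t_2$. If $t_1 \le t_2$, then the sequence of contacts $\langle (uv, t_1), (vw, t_2)\rangle$ has non-decreasing times, hence is a valid \emph{non-strict} journey from $u$ to $w$, so $u$ reaches $w$. Symmetrically, if $t_2 \le t_1$, then $\langle (vw, t_2), (uv, t_1)\rangle$ is a non-strict journey from $w$ to $u$, so $w$ reaches $u$. Since one of $t_1 \le t_2$ or $t_2 \le t_1$ always holds, at least one of the two reachability arcs is present in $\closure(\G)$, which is exactly the claim.

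The crucial point — and the reason the lemma is specific to the non-strict setting — is that equality $t_1 = t_2$ is permitted: when both edges carry the same label, a non-strict journey can still traverse them consecutively at that common time, whereas a strict journey could not. This is the only place where non-strictness is genuinely used, and it is what guarantees that no configuration of labels can simultaneously block both directions.

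I do not anticipate a real obstacle here; the statement is essentially immediate once the two-case argument is set up. The only mild subtlety worth stating explicitly is that the common neighbor $v$ is guaranteed by the distance-two hypothesis (the footprint path $u$--$v$--$w$), and that we only need \emph{one} label on each incident edge, so non-simpleness or multiplicity of labels plays no role. I would keep the proof to a couple of sentences, emphasizing the dichotomy and the role of the allowed equality of times.
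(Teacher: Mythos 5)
Your proof is correct: the dichotomy $t_1 \le t_2$ or $t_2 \le t_1$ on any pair of labels of the two incident edges immediately yields a non-strict journey in at least one direction, and you rightly identify that permitting equality of consecutive times is where non-strictness is used. The paper itself states this lemma without proof, treating it as evident, and your argument is precisely the intended justification, so there is nothing to reconcile between the two.
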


\subsubsection{``Simple \& strict'' $vs.$ ``strict''}

\begin{lemma}
  \label{lemma:ft-strict-simple}
  There is a graph in the ``strict'' setting whose reachability graph cannot be obtained from a graph in the ``simple \& strict'' setting.
\end{lemma}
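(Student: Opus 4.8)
The plan is to exhibit one explicit temporal graph $\G$ in the (general) strict setting, compute $\closure(\G)$, and show no simple strict graph has an isomorphic reachability graph. The whole separation rests on a single structural weakness of the simple strict setting, which I would isolate first as a preliminary claim: \emph{in a simple strict graph, two degree-one vertices $a$ and $c$ attached to a common vertex $b$ cannot be mutually reachable.} Indeed, every journey out of $a$ must begin with the unique edge $ab$ (single label $t_{ab}$) and every journey into $c$ must end with the unique edge $bc$ (single label $t_{bc}$), so $a\rightsquigarrow c$ forces $t_{ab}<t_{bc}$ while $c\rightsquigarrow a$ symmetrically forces $t_{bc}<t_{ab}$, a contradiction. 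The role of the construction is that multiple labels defeat exactly this obstruction: giving $ab$ the labels $\{1,4\}$ and $bc$ the labels $\{2,3\}$ yields $a\to c$ (using $1<2$) and $c\to a$ (using $3<4$) through the single relay $b$. So I would build $\G$ around this ``bowtie through $b$'', which realizes mutual reachability that a simple graph cannot realize through one relay.

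The subtlety is that $\closure(\G)$ does not by itself pin down the footprint of a hypothetical simple strict realizer $\G'$, so I would use a reachability-to-footprint dictionary. A single edge is always traversable in both directions; hence adjacency in the footprint is compatible \emph{only} with mutual reachability, and, conversely, any pair that is one-directionally related (or unrelated) in $\closure(\G')$ must be non-adjacent in $\G'$. I would therefore design $\G$ so that in $\closure(\G)$ the vertex $a$ is mutually reachable with nothing except $b$ and $c$, and likewise $c$ only with $a$ and $b$. Under the dictionary, this forces $a$ and $c$ to be pendant to $b$ in \emph{every} simple strict realizer, with the sole exception of a possible direct edge $ac$; the whole argument thus reduces to forbidding that edge, after which the preliminary claim rules out the arcs $a\to c$ and $c\to a$ that $\closure(\G)$ nonetheless contains.

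To forbid the edge $ac$ I would surround the gadget with auxiliary vertices whose reachability to and from $\{a,b,c\}$ is strictly one-directional. The idea is that a hypothetical edge $ac$, carrying a single label $\theta$, would splice journeys together: an auxiliary $s$ with $s\to a$ but $s\not\to c$ would yield the spurious arc $s\to c$ if its journey reached $a$ before $\theta$, and symmetrically an auxiliary $s'$ with $c\to s'$ but $a\not\to s'$ would yield $a\to s'$ if its journey left $c$ after $\theta$. Choosing the auxiliaries so that these two requirements squeeze $\theta$ from both sides — no matter how the realizer chooses $\theta$ and the auxiliary labels — makes the edge $ac$ impossible, completing the reduction. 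The proof then concludes mechanically: transport the isomorphism $\closure(\G')\simeq\closure(\G)$ to name $a,b,c$ and the auxiliaries, read off the forced non-adjacencies, and apply the preliminary claim.

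The step I expect to be the main obstacle is precisely this last forcing. Because temporal reachability is not transitive, a one-directional arc $s\to a$ does \emph{not} automatically propagate through a putative edge $ac$ (the journey $s\rightsquigarrow a$ may arrive too late to continue), so the realizer retains the freedom to push arrival and departure times around and slip $\theta$ into a consistent position. Defeating this requires a rigid temporal scaffold of auxiliary contacts whose mutual timing constraints leave no admissible value for $\theta$, together with a short case analysis over the few footprints that survive the forced non-adjacencies. The remaining task — checking that $\closure(\G)$ is exactly the intended digraph, with no spurious arcs created by the scaffold itself — is routine but must be verified carefully.
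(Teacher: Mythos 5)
Your preliminary claim and your ``dictionary'' are both correct: in a simple strict graph a single edge is a two-way journey, so footprint adjacency forces mutual reachability, and two pendant vertices hanging off a common neighbor can never reach each other both ways. The gap is that your entire argument funnels into the one step you explicitly leave unconstructed: forbidding the direct edge $ac$. Because you chose to make $a$ and $c$ \emph{mutually} reachable in $\closure(\G)$, a simple strict realizer can satisfy that mutual reachability trivially by putting the edge $ac$ in the footprint, and nothing in your dictionary blocks this. Your proposed fix --- auxiliary vertices whose one-directional arcs ``squeeze'' the label $\theta$ of $ac$ from both sides --- is only a hope, not a proof: as you yourself observe, a one-way arc $s\to a$ in the reachability digraph imposes no constraint on \emph{when} journeys arrive at $a$, since the hypothetical realizer controls every label and is bound only by the isomorphism type of the digraph. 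You would need chains of forced strict inequalities (derived purely from which arcs are present and absent) tight enough to exclude every admissible $\theta$, and an additional complication arises that you do not mention: every non-isolated vertex of any temporal graph has at least one mutual-reachability partner (its footprint neighbors), so your auxiliaries cannot be purely one-directional, which further constrains the scaffold you would have to design and verify. Until that scaffold is exhibited and both directions checked (no admissible $\theta$, and no spurious arcs in $\closure(\G)$ itself), the lemma is not proved.

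The paper's proof shows how to avoid this difficulty entirely, essentially by applying your own dictionary to a \emph{one-way} pair rather than trying to realize mutual reachability without adjacency. It takes $\G$ to be the path $a\,b\,c\,d$ with $\lambda(ab)=\{1\}$, $\lambda(bc)=\{1,2\}$, $\lambda(cd)=\{2\}$ under strict journeys: then $\closure(\G)$ has mutual pairs exactly $ab$, $bc$, $cd$, plus one-way arcs $a\to c$ and $b\to d$, and no arc between $a$ and $d$. Since $(a,c)$ is one-way, the edge $ac$ is excluded from any realizer's footprint for free, so the arc $a\to c$ in a simple strict $\mathcal{H}$ can only come from $\lambda(ab)<\lambda(bc)$; symmetrically $b\to d$ forces $\lambda(bc)<\lambda(cd)$; chaining the two strict inequalities produces a strict journey from $a$ to $d$, contradicting the absence of that arc. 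In short, where you place the multi-label trick on the mutual pair (labels $\{1,4\}$ and $\{2,3\}$ around a relay) and must then fight off the edge $ac$, the paper places it on $bc$ to create two \emph{one-way} arcs whose forced inequalities collide --- no auxiliary gadgets, no case analysis. If you want to salvage your route, the honest assessment is that the squeeze argument is the whole theorem and remains to be done; switching your target pattern from ``mutual without adjacency'' to ``one-way arcs chaining into a forbidden arc'' dissolves the obstacle.
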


\begin{proof}
  Consider the following non-simple graph $\G$ (left) in a strict setting and the corresponding reachability graph (right). We will prove that a hypothetical simple temporal graph $\mathcal{H}$ with same reachability graph as $\G$ cannot be built in the strict setting.
  \begin{figure}[ht]
    \centering
    \begin{tabular}{cc}
    $\G=$
    \begin{tikzpicture}
      \tikzset{vertex/.style={draw, circle, inner sep=0.55mm}}

      \node[vertex] (a) [label = $a$] at (0,0) {};
      \node[vertex] (b) [label = $b$] at (1,0) {};
      \node[vertex] (c) [label = $c$] at (2,0) {};
      \node[vertex] (d) [label = $d$] at (3,0) {};

      \begin{scope}[font=\footnotesize]
        \draw (a) edge node[above] {1} (b);
        \draw (b) edge node[above] {1,2} (c);
        \draw (c) edge node[above] {2} (d);
      \end{scope}
    \end{tikzpicture}
      &
        $\closure(\G)=$
    \begin{tikzpicture}
      \tikzset{vertex/.style={draw, circle, inner sep=0.55mm}}

      \node[vertex] (a) [label = $a$] at (0,0) {};
      \node[vertex, above] (b) [label = $b$] at (1,0) {};
      \node[vertex, below] (c) [label = $c$] at (2,0) {};
      \node[vertex] (d) [label = $d$] at (3,0) {};

      \begin{scope}[font=\footnotesize]
        \draw (a) edge (b);
        \draw (b) edge (c);
        \draw (c) edge (d);
        \draw[->,semithick,bend right] (a) edge (c);
        \draw[->,semithick,bend left] (b) edge (d);
      \end{scope}
    \end{tikzpicture}
      \\
      \end{tabular}
  \end{figure}
  First, observe that the arc $(a,c)$ in $\closure(\G)$ exists only in one direction. Thus, $a$ and $c$ cannot be neighbors in $\mathcal{H}$. Since $\mathcal{H}$ is simple and the journeys are strict (and $a$ has no other neighbors in $\closure(\G)$), the arc $(a,c)$ can only result from the label of $ab$ being strictly less than $bc$. The same argument holds between $bc$ and $cd$ with respect to the arc $(b,d)$ in $\closure(\G)$. As a result, the labels of $ab$, $bc$, and $cd$ must be strictly increasing, which is impossible since $(a,d)$ does not exist in $\closure(\G)$.
\end{proof}

As simple graphs are a particular case of non-simple graphs, the following follows.

\begin{corollary}
  \label{cor:simple-strict<simple-non-strict}
  The ``simple \& strict'' setting is strictly less expressive than the ``strict'' setting in terms of reachability graphs.
\end{corollary}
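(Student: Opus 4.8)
The plan is to split the statement into a containment and a strict separation, and to observe that the separation is already furnished by Lemma~\ref{lemma:ft-strict-simple}. Recall that, for the purpose of this comparison, one setting is \emph{at most as expressive} as another if every reachability graph realizable in the first (up to isomorphism) is also realizable in the second, and \emph{strictly less expressive} if in addition the reverse inclusion fails. With this reading, the corollary amounts to showing (i) every reachability graph obtainable from a ``simple \& strict'' graph is also obtainable from a ``strict'' graph, and (ii) some ``strict'' reachability graph is not obtainable in the ``simple \& strict'' setting.

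For part~(i), I would argue that the witness can be reused verbatim. A simple temporal graph is nothing but a temporal graph whose labeling $\lambda$ happens to be single-valued, so every ``simple \& strict'' graph $\G$ is already a legitimate instance of the (general) ``strict'' setting. Moreover, strictness constrains the \emph{journeys} and not the graph, and the collection of strict journeys of $\G$ depends only on its contacts; viewing $\G$ as a non-simple graph changes neither its contacts nor the strictness convention, hence it leaves the set of strict journeys---and therefore $\closure(\G)$---unchanged. Consequently, any reachability graph realized in ``simple \& strict'' is realized, by the very same graph, in ``strict''.

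For part~(ii), I would simply invoke Lemma~\ref{lemma:ft-strict-simple}, which exhibits a ``strict'' graph whose reachability graph cannot be reproduced by any ``simple \& strict'' graph. Combining (i) and (ii) yields that the ``simple \& strict'' setting realizes a proper subset of the reachability graphs realizable in the ``strict'' setting, which is precisely the claimed strict inequality of expressivity.

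I do not expect a genuine obstacle at the level of the corollary: the whole difficulty has been discharged in the proof of Lemma~\ref{lemma:ft-strict-simple}, and what remains is bookkeeping. The only points demanding a little care are making the notion of ``less expressive'' explicit (a subset relation between the classes of realizable reachability graphs) and confirming that embedding a simple graph into the larger setting preserves its reachability graph---which holds because the contacts, and hence the journeys, are untouched.
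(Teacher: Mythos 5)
Your proposal is correct and matches the paper's reasoning exactly: the paper derives the corollary from Lemma~\ref{lemma:ft-strict-simple} together with the one-line observation that simple graphs are a particular case of non-simple graphs, which is precisely your containment step~(i) plus separation step~(ii). You merely spell out the bookkeeping (that embedding a simple graph into the general strict setting leaves its contacts, hence its strict journeys and reachability graph, unchanged) that the paper leaves implicit.
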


\subsubsection{``Non-strict'' $vs.$ ``simple \& strict''}

\begin{lemma}
  \label{lemma:ft-strict-non-strict}
  There is a graph in the ``simple \& strict'' setting whose reachability graph cannot be obtained from a graph in the ``non-strict'' setting.
\end{lemma}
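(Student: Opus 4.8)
The plan is to exhibit a small simple-and-strict graph whose reachability graph embeds a forbidden pattern for the non-strict setting, the obstruction coming from Lemma~\ref{lem:distance-3}. Recall that in the non-strict setting, any two vertices at footprint-distance two must have at least one arc between them in the reachability graph. So the natural strategy is to find a reachability graph $\closure(\G)$, realizable by a simple strict graph, in which two vertices are at distance two in \emph{every} compatible footprint yet have \emph{no} arc between them. Since support equivalence forces the footprint, and reachability equivalence does not, I must be careful: the target graph in the non-strict setting may use a different footprint. The cleanest approach is to engineer $\closure(\G)$ so that the \emph{only} footprints consistent with it (i.e. footprints on which some non-strict labeling produces exactly these arcs) still place the two offending vertices at distance two.

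Let me think forward about the construction. Take $\G$ to be a path $a - b - c$ with strictly increasing labels, say $\lambda(ab)=1$, $\lambda(bc)=2$, together with some gadget that kills one direction. In a bare strict path $a-b-c$ with labels $1,2$, the reachability graph has arcs $a\to b$, $b\to c$, $a\to c$, and the reverse single-edge arcs $b\to a$, $c\to b$ (since a single strict hop is reversible when read as the journey in the other direction — actually a single contact gives arcs in both directions between its endpoints). The key asymmetry I want is an ordered pair of non-adjacent vertices with an arc in only one direction, plus the absence of an arc where distance-two would force one. I would design $\G$ so that in $\closure(\G)$ there are two vertices $u,v$ with no arc between them, while every vertex adjacent (in $\closure(\G)$) to both forces them to distance two in any realizing footprint.

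The key step, and the main obstacle, is the \textbf{footprint-forcing argument}: I must argue that any non-strict graph $\mathcal{H}$ with the same reachability graph is constrained enough that Lemma~\ref{lem:distance-3} applies to the offending pair. The subtlety is that $\mathcal{H}$ is free to choose \emph{any} footprint, so I cannot simply read the footprint off $\closure(\G)$. The argument must show that the adjacency structure of $\closure(\G)$ forces certain edges to be present in $\mathcal{H}$'s footprint (e.g.\ because reachability between two vertices with no common reachable intermediary can only arise from a direct edge), thereby pinning down a distance-two pair that violates the non-strict conclusion. Concretely, I expect to argue: the arcs of $\closure(\G)$ that are ``isolated'' (a single arc $x\to y$ with no alternative two-hop explanation available in any footprint consistent with the other arcs) must be realized by a direct footprint edge $xy$; chaining two such forced edges puts their outer endpoints at distance two, yet $\closure(\G)$ has no arc between them — contradicting Lemma~\ref{lem:distance-3}.

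I would present this as: first state the explicit $\G$ and compute $\closure(\G)$ by inspection; second, suppose a non-strict $\mathcal{H}$ realizes the same reachability graph and identify which footprint edges of $\mathcal{H}$ are forced; third, locate the distance-two pair and invoke Lemma~\ref{lem:distance-3} for the contradiction. The hardest part is genuinely the second step — ruling out clever footprints of $\mathcal{H}$ that route reachability through longer paths or extra edges — so the construction should be minimal enough (few vertices, sparse $\closure(\G)$) that the space of candidate footprints is essentially forced, making the distance-two collision unavoidable.
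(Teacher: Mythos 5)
Your proposal correctly identifies the paper's key tool (Lemma~\ref{lem:distance-3}) and the right overall shape: exhibit a strict reachability graph containing a pair of vertices with \emph{no} arc between them, such that any non-strict realization is forced to place that pair at footprint-distance two. But the proof is never actually carried out, and the one concrete ingredient you give points the wrong way. With $\lambda(ab)=1$, $\lambda(bc)=2$ the arc $a\to c$ \emph{is} present, so that pair cannot be the violating one; you then defer to an unspecified ``gadget that kills one direction'' and to a footprint-forcing argument that you yourself flag as the main obstacle and leave unproven. Note also that killing one direction would not suffice in any case: Lemma~\ref{lem:distance-3} yields a contradiction only when \emph{both} arcs are absent, since a pair with a one-way arc is perfectly compatible with being at distance two in the footprint of a non-strict graph.

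The missing idea is simply to use \emph{equal} labels. The paper takes the path $a-b-c$ with $\lambda(ab)=\lambda(bc)=1$, considered strictly: strictness then blocks both directions between $a$ and $c$, so $\closure(\G)$ is exactly the bidirected path on three vertices --- no gadget needed. Moreover, on three vertices the ``clever footprints'' you worry about collapse entirely: the footprint of any realizing $\mathcal{H}$ (on the same number of vertices, by definition of reachability equivalence) must be connected but not complete, since a disconnected footprint leaves a vertex with too few arcs and $K_3$ makes every pair mutually reachable; hence the footprint is the path itself, with the two endpoints at distance two. Lemma~\ref{lem:distance-3} (or directly: for any $t_1\in\lambda_{\mathcal{H}}(ab)$ and $t_2\in\lambda_{\mathcal{H}}(bc)$, either $t_1\le t_2$ or $t_1\ge t_2$, giving a non-strict journey in one direction or the other) then forces an arc between the endpoints, contradicting $\closure(\G)$. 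So your strategy is the right one, but as written the proposal has a genuine gap --- no witness graph and a forcing step that is only promised --- and the minimal three-vertex example with equal labels is precisely what closes it.
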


\begin{proof}
  Consider the following simple temporal graph $\G$ (left) in a strict
  setting and the corresponding reachability graph (right). Note that
  $a$ and $c$ are not neighbors in $\closure(\G)$, due to strictness. For
  the sake of contradiction, let $\mathcal{H}$ be a temporal graph
  whose non-strict reachability graph is isomorphic to that of $\G$.
  \begin{figure}[ht]
    \centering
    \begin{tabular}{cc}
      $\G=$
      \begin{tikzpicture}[every loop/.style={}, thick, node distance=1.2cm,auto]
        \tikzset{vertex/.style={draw, circle, inner sep=0.55mm}}

        \node[vertex] (a) [label = $a$] at (0,0) {};
        \node[vertex] (b) [label = $b$] at (1,0) {};
        \node[vertex] (c) [label = $c$] at (2,0) {};

        \begin{scope}
        \draw (a) edge node {1} (b);
        \draw (b) edge node {1} (c);
        \end{scope}
    \end{tikzpicture}
    & $\closure(\G)=$
    \begin{tikzpicture}[every loop/.style={}, thick, node distance=1.2cm,auto]
        \tikzset{vertex/.style={draw, circle, inner sep=0.55mm}}

        \node[vertex] (a) [label = $a$] at (0,0) {};
        \node[vertex] (b) [label = $b$] at (1,0) {};
        \node[vertex] (c) [label = $c$] at (2,0) {};

        \begin{scope}
        \draw (a) edge node {} (b);
        \draw (b) edge node {} (c);
        \end{scope}
    \end{tikzpicture}
    \\
    \end{tabular}
    \label{fig:strict-counter}
  \end{figure}
  First, observe that the footprint of $\mathcal{H}$ must be
  isomorphic to the footprint of $\G$, as otherwise it is either complete or not
  connected. Call $b$ the vertex of degree two in
  $\mathcal{H}$. If
  $\lambda_{\mathcal{H}}(ab) \ne \lambda_{\mathcal{H}}(bc)$, then
  either $a$ can reach $c$ or $c$ can reach $a$, and if
  $\lambda_{\mathcal{H}}(ab) = \lambda_{\mathcal{H}}(bc)$, then both can reach each other through a non-strict
  journey. In both cases, $\closure(\mathcal{H})$ contains more arcs
  than $\closure(\mathcal{G})$.
\end{proof}

As stated in the end of the section, the reverse direction is left open.

\subsubsection{``Simple \& non-strict'' $vs.$ ``proper''}

\begin{lemma}
  \label{lemma:ft-non-strict-simple}
  There is a graph in the ``proper'' setting whose reachability graph cannot be obtained from a graph in the ``simple \& non-strict'' setting.
\end{lemma}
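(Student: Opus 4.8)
The plan is to exhibit a single proper graph $\G$ on four vertices and show that its reachability graph is off-limits to every simple \& non-strict graph. I would take the path $a - b - c - d$ and make it non-simple only on the middle edge: set $\lambda(ab)=\{2\}$, $\lambda(bc)=\{1,3\}$, and $\lambda(cd)=\{2\}$. This is proper, since incident edges receive disjoint label sets. A direct computation of $\closure(\G)$ gives every pair bidirectionally connected \emph{except} the pair $\{a,d\}$, which carries no arc at all: indeed $a\leftrightarrow c$ (via $2,3$ and via $1,2$) and $b\leftrightarrow d$ (via $1,2$ and via $2,3$), while any $a$--$d$ journey would need the middle edge to occur strictly between time $2$ and time $2$, which is impossible. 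The role of the double label on $bc$ is precisely that it lets $bc$ act \emph{early} for one journey and \emph{late} for another; I expect this to be the feature that a single label cannot reproduce.

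Next I would suppose, for contradiction, that some simple \& non-strict $\mathcal{H}$ satisfies $\closure(\mathcal{H}) \simeq \closure(\G)$, and identify its vertices with $a,b,c,d$. First I would record two structural facts. A footprint edge always yields a bidirectional pair in the reachability graph, so $ad$ (which carries no arc) is not a footprint edge; and by Lemma~\ref{lem:distance-3}, since $a$ and $d$ have no arc between them they cannot be at footprint-distance two, hence share no common footprint-neighbour. The only candidate common neighbours are $b$ and $c$, so $a$ and $d$ must attach to $\{b,c\}$ through disjoint neighbourhoods. Since $a$ must still reach both $b$ and $c$ (and likewise $d$), if $a$ were adjacent to both $b$ and $c$ then $d$ would be adjacent to neither and hence isolated; so each of $a,d$ is adjacent to exactly one of $b,c$, and to different ones. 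Up to the automorphism swapping $b$ and $c$, this gives $a \sim b$ and $d \sim c$. Finally, $a$ reaching $c$ must pass through its only neighbour $b$, which forces $bc$ to be a footprint edge as well. Thus the footprint of $\mathcal{H}$ is forced to be exactly the path $a - b - c - d$.

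It then remains to examine single-valued labelings $t_{ab},t_{bc},t_{cd}$ on this path. The requirement $a \leftrightarrow c$ forces $t_{ab}=t_{bc}$ and $b \leftrightarrow d$ forces $t_{bc}=t_{cd}$, so all three labels coincide; but then the (constant, hence non-decreasing) sequence along $a,b,c,d$ is a valid non-strict journey, giving the arc $a \to d$ that $\closure(\G)$ forbids. This contradiction shows $\closure(\G)$ cannot be realized in the simple \& non-strict setting, which proves the lemma.

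The hard part will be the footprint-forcing argument, and it is the step I would spend the most care on: simple \& non-strict graphs are deceptively expressive, because a connected snapshot lets a single journey cross many edges at one time, so naive candidate reachability graphs tend to be realizable by rerouting. The no-arc pair $\{a,d\}$ together with Lemma~\ref{lem:distance-3} is exactly the lever that removes this freedom and pins the footprint down to a single path; everything after that is a short verification. It is worth emphasizing that the construction isolates \emph{multiplicity} (non-simpleness) as the essential ingredient, which is consistent with the earlier observation that converting certain non-simple constructions to the proper setting costs extra labels per edge.
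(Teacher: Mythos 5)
Your proof is correct and takes essentially the same approach as the paper's: you use the identical construction ($\lambda(ab)=\{2\}$, $\lambda(bc)=\{1,3\}$, $\lambda(cd)=\{2\}$, with the diamond as reachability graph), the same appeal to Lemma~\ref{lem:distance-3} to force the footprint of $\mathcal{H}$ to be a $P_4$ with endpoints $a$ and $d$, and the same argument that mutual reachability within $\{a,b,c\}$ and $\{b,c,d\}$ forces $t_{ab}=t_{bc}=t_{cd}$, yielding a forbidden non-strict journey from $a$ to $d$. Your footprint-forcing step merely spells out explicitly (via disjoint neighbourhoods in $\{b,c\}$) what the paper asserts directly, and your handling of the $b$/$c$ relabeling matches the paper's ``whatever the way identifiers are assigned'' remark.
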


\begin{proof}
  Consider the following proper temporal graph $\G$ (left). Its reachability graph (right) is a graph on four vertices, with an edge between any pair of vertices except $a$ and $d$ (i.e., a diamond). For the sake of contradiction, let $\mathcal{H}$ be a simple temporal graph in the non-strict setting, whose reachability graph is isomorphic to that of~$\G$.
  \begin{figure}[ht]
    \centering
    \begin{tabular}{cc}
    $\G=$
    \begin{tikzpicture}
      \tikzset{vertex/.style={draw, circle, inner sep=0.55mm}}

      \node[vertex] (a) [label = $a$] at (0,0) {};
      \node[vertex] (b) [label = $b$] at (1,0) {};
      \node[vertex] (c) [label = $c$] at (2,0) {};
      \node[vertex] (d) [label = $d$] at (3,0) {};

      \begin{scope}[font=\footnotesize]
        \draw (a) edge node[above] {2} (b);
        \draw (b) edge node[above=-1.5pt] {1,3} (c);
        \draw (c) edge node[above] {2} (d);
      \end{scope}
    \end{tikzpicture}
      &
        $\closure(\G)=$
    \begin{tikzpicture}
      \tikzset{vertex/.style={draw, circle, inner sep=0.55mm}}

      \node[vertex] (a) [label = $a$] at (0,0) {};
      \node[vertex,above] (b) [label = $b$] at (1,0) {};
      \node[vertex,below] (c) [label = $c$] at (2,0) {};
      \node[vertex] (d) [label = $d$] at (3,0) {};

      \begin{scope}[font=\footnotesize]
        \draw (a) edge (b);
        \draw (b) edge (c);
        \draw (c) edge (d);
        \draw[bend right] (a) edge (c);
        \draw[bend left] (b) edge (d);
      \end{scope}
    \end{tikzpicture}
      \\
      \end{tabular}
  \end{figure}
  First, observe that no arcs exist between $a$ and $d$ in the
  reachability graph, thus $a$ and $d$ must be at least at distance
  $3$ in the footprint (Lemma~\ref{lem:distance-3}), which is only
  possible if the footprint is a graph isomorphic to $P_4$ (i.e. a
  path graph on four vertices) with endpoints $a$ and $d$.
  Now, let $t_1,t_2,$ and $t_3$ be the labels of $ab,bc$, and $cd$ respectively.
  Since
  $\{a,b,c\}$ is a clique in the reachability graph (whatever the way identifiers
  $b$ and $c$ are assigned among the two remaining vertices), they
  must be temporally connected in $\mathcal{H}$, which forces that
  $t_1 = t_2$ (otherwise both edges could be travelled in only one
  direction). Similarly, the fact that $\{b,c,d\}$ is a clique in the
  reachability graph forces $t_2 = t_3$. As a result, there must be a non-strict
  journey between $a$ and $d$, which contradicts the
  absence of arc between $a$ and $d$ in the reachability graph.
\end{proof}

\noindent
The next corollary follows by inclusion of proper graphs in the non-strict
setting.

\begin{corollary}
  \label{cor:simple<non-strict}
  The ``simple \& non-strict'' setting is strictly less expressive than the ``non-strict'' setting in terms of reachability graphs.
\end{corollary}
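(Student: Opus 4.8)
The plan is to combine the separation just established in Lemma~\ref{lemma:ft-non-strict-simple} with two containment arguments, one in each direction. First I would dispatch the easy ``at most as expressive'' direction: a simple temporal graph is merely a special case of a (non-simple) temporal graph, so every reachability graph realizable by some graph in the ``simple \& non-strict'' setting is, trivially, realized by that very same graph when viewed in the ``non-strict'' setting. Hence ``simple \& non-strict'' is no more expressive than ``non-strict'', and only the strictness of the inequality remains to be shown.

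To obtain strictness, I would exhibit a reachability graph that lies in ``non-strict'' but not in ``simple \& non-strict''. The natural witness is the diamond produced by the proper graph $\G$ of Lemma~\ref{lemma:ft-non-strict-simple}: that lemma already certifies that this reachability graph \emph{cannot} be obtained from any graph in the ``simple \& non-strict'' setting. So the only remaining task is to check that the diamond \emph{is} attainable in the ``non-strict'' setting, which is exactly the inclusion alluded to just before the statement.

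For this inclusion, the key observation is that the proper graph $\G$ can be reinterpreted, without any modification, as a general temporal graph in the ``non-strict'' setting. The subtle point — and the step I would be most careful about — is to confirm that this reinterpretation does not silently create new arcs in the reachability graph. Here properness does the work: since no two adjacent edges of $\G$ share a label, no journey can traverse two consecutive edges at the same time step, so every non-strict journey of $\G$ is in fact already strict. Consequently the non-strict reachability graph of $\G$ coincides with its reachability graph as a proper graph, namely the diamond $\closure(\G)$. The diamond is therefore realizable in ``non-strict'' but not in ``simple \& non-strict''; combined with the containment of the first paragraph, this gives the strict separation claimed.

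I do not expect any genuine obstacle beyond making the last observation explicit: the whole argument is a routine composition of a known separation with a containment, and the one place where a reader might worry is whether passing to the non-strict semantics could enlarge the reachability of the proper witness. Spelling out that properness collapses the strict/non-strict distinction for $\G$ closes that gap and completes the proof.
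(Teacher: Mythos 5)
Your proposal is correct and takes essentially the same route as the paper, which derives the corollary in one line from Lemma~\ref{lemma:ft-non-strict-simple} together with the inclusion of proper graphs in the non-strict setting --- exactly the composition you describe. Your explicit check that properness collapses the strict/non-strict distinction for the witness (so reinterpreting it in the non-strict setting adds no arcs to its reachability graph) is precisely the reason that inclusion is reachability-preserving, so your write-up simply makes the paper's implicit step explicit.
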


\subsubsection{``simple \& proper (i.e. happy)'' $vs.$ ``simple \& non-strict''}

\begin{lemma}
  \label{lemma:ft-non-strict-simple-happy}
  There is a graph in the ``simple \& non-strict'' setting whose reachability graph cannot be obtained in the ``happy'' setting.
\end{lemma}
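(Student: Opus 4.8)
The plan is to exhibit a concrete simple (non-strict) temporal graph $\G$ together with its reachability graph $\closure(\G)$, and then argue by contradiction that no happy graph $\mathcal{H}$ satisfies $\closure(\mathcal{H}) \simeq \closure(\G)$. Since a happy graph is proper, its strict and non-strict journeys coincide, and its reachability graph $\closure(\mathcal{H})$ is still computed with (non-strict) journeys, so Lemma~\ref{lem:distance-3} applies to $\mathcal{H}$. I will lean on three elementary structural facts. First, in any simple temporal graph a single contact $(e,t)$ is a valid journey in both directions, so every footprint edge forces its two endpoints to be \emph{mutually} reachable. Contrapositively --- and this is the key leverage --- any pair of vertices joined by a one-directional arc, or by no arc at all, in $\closure(\mathcal{H})$ must be \emph{non-adjacent} in the footprint of $\mathcal{H}$. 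Second, by Lemma~\ref{lem:distance-3}, a pair that is non-reachable in $\closure(\G)$ must be at footprint-distance at least $3$ in $\mathcal{H}$. Together these pin the footprint of $\mathcal{H}$ inside the graph of mutually-reachable pairs and bound several distances from below.

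The separating feature to exploit is the one capability that non-strictness has and properness forbids: two edges $uw$ and $wv$ carrying the \emph{same} label make $u$ and $v$ mutually reachable through the single intermediate $w$. I would therefore build $\G$ around such an equal-labeled cherry $u - w - v$, so that $u \leftrightarrow v$ is mutual in $\closure(\G)$, and then surround it with auxiliary vertices and edges whose images in $\closure(\G)$ form a carefully chosen pattern of one-directional arcs and non-reachable pairs. These auxiliary relations serve two purposes. By the non-adjacency and distance-$3$ constraints above, they force the footprint of any happy realization to be rigid around $u$ and $v$ --- so sparse that the only route between $u$ and $v$ passes through $w$, which rules out realizing $u \leftrightarrow v$ by a pair of independent directed detours. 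Simultaneously, the forbidden arcs act as timing constraints: a forbidden arc $x \not\to y$ says that $x$ arrives ``too late'' to use some connection, which translates into an inequality between labels, and by routing two relay vertices into $u$ through a common ``timer'' vertex one can force their arrival times at $u$ to occur in a prescribed order independently of the realization.

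The contradiction then comes in two halves. On the forced route $u - w - v$, properness makes the labels of $uw$ and $wv$ distinct, so the cherry can only be traversed in one direction, giving at most one of $u \to v$, $v \to u$ --- whereas $\closure(\G)$ demands both. The remaining escape, a direct footprint edge $uv$ with a single label $s$, is excluded by the timing layer: a required arc that can only be delivered through $uv$ pushes $s$ above one relay's arrival time at $u$, while a forbidden arc $x \not\to v$ pushes $s$ below another relay's arrival time, and the timer has been arranged so that these two bounds are inconsistent. Hence no happy $\mathcal{H}$ exists. I expect the main obstacle to be exactly this elimination of happy's two escapes --- replacing the equal-labeled cherry by a chord, and realizing the mutual pair through a return path --- since happy graphs are otherwise free to choose both their footprint and their label values; making $\closure(\G)$ force the crucial arrival-time ordering in a realization-independent way is the delicate heart of the argument. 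As a safety net, the fact that happy graphs on a fixed vertex set are finite up to order-preserving time-distortion makes the non-existence of a realizing $\mathcal{H}$ verifiable by a finite case analysis on the chosen instance.
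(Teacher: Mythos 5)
Your blueprint is exactly the strategy of the paper's proof --- a small concrete instance built around an equal-labeled cherry (the one reachability feature that simple non-strict graphs have and properness forbids), plus footprint rigidity extracted from one-way arcs and Lemma~\ref{lem:distance-3}, plus chains of label inequalities --- but as submitted it stops before the proof begins. The lemma is an existence statement, so its entire content is a witness together with a verified analysis, and you supply neither: you say you \emph{would} build $\G$ with auxiliary vertices whose closure pattern makes the constraint system on any happy realization unsatisfiable, and you yourself flag this as ``the delicate heart'' of the argument. Asserting that such a gadget exists is essentially assuming the lemma. In particular, two steps you postulate are not obviously achievable simultaneously and are where a naive instance fails: (i) since the cherry pair is mutually reachable in $\closure(\G)$, a direct footprint edge between its images in $\mathcal{H}$ cannot be excluded structurally (footprint edges are only forbidden across one-way or absent arcs), so it must be killed by timing constraints that hold in \emph{every} realization; and (ii) the ``timer'' producing a realization-independent ordering of arrival times must itself be realizable by a simple non-strict $\G$ without introducing unwanted arcs into $\closure(\G)$. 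Your finite-verification safety net is not a proof as written either, since no instance is ever fixed to verify.

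For comparison, the paper's witness has five vertices with edges $ac$ and $ad$ labeled $1$ (the cherry, giving $c \leftrightarrow d$ through $a$ non-strictly), $bc$ and $de$ labeled $2$, and $cd$ labeled $3$; its closure has one-way arcs $a \to b$ and $a \to e$, no arc between $b$ and $e$, and all other pairs mutual. The proof then runs precisely your program: the one-way arcs confine $a$'s footprint neighbors in $\mathcal{H}$ to $\{c,d\}$; a separate claim eliminates the edge $bd$ by accumulating the chain $\lambda(ad) < \lambda(ac) < \lambda(ce) < \lambda(cd) < \lambda(bd)$, which leaves $b$ unable to reach $c$; and the main branch then forces $\lambda(bc) > \lambda(ac)$, $\lambda(cd) > \lambda(bc)$, $\lambda(ad) < \lambda(ac)$, and constraints on $\lambda(de)$ that end with $c$ unable to reach $e$, a contradiction. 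So what is missing from your attempt is not an idea but the execution: fix a specific graph, compute its reachability graph, and carry the case analysis on footprints and label orders through to the end --- roughly a page of work on a five-vertex example, as the paper shows.
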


\begin{proof}
  Consider the following simple temporal graph $\G$ (left) in a non-strict
  setting and the corresponding reachability graph (right). For the
  sake of contradiction, let $\mathcal{H}$ be a happy temporal graph
  whose reachability graph is isomorphic to that of $\mathcal{G}$.

\begin{figure}[ht]
  \centering
  \begin{tabular}{cc}
    $\G=$
    \begin{tikzpicture}[every loop/.style={}, thick, node distance=1.2cm,auto]
        \tikzset{vertex/.style={draw, circle, inner sep=0.55mm}}

        \node[vertex] (a) [label = $a$] at (0,0) {};
        \node[vertex] (b) [label = below:$b$] at (-2,-1) {};
        \node[vertex] (c) [label = below:$c$] at (-1,-1) {};
        \node[vertex] (d) [label = below:$d$] at (1,-1) {};
        \node[vertex] (e) [label = below:$e$] at (2,-1) {};

        \begin{scope}[>=stealth]
          \draw (c) edge node {$1$} (a);
          \draw (a) edge node {$1$} (d);
          \draw (b) edge node {$2$} (c);
          \draw (c) edge node {$3$} (d);
          \draw (d) edge node {$2$} (e);
        \end{scope}
    \end{tikzpicture}

    & $\closure(\G)=$

    \begin{tikzpicture}[every loop/.style={}, thick, node distance=1.2cm,auto]
        \tikzset{vertex/.style={draw, circle, inner sep=0.55mm}}

        \node[vertex] (a) [label = $a$] at (0,0) {};
        \node[vertex] (b) [label = below:$b$] at (-2,-1) {};
        \node[vertex] (c) [label = below:$c$] at (-1,-1) {};
        \node[vertex] (d) [label = below:$d$] at (1,-1) {};
        \node[vertex] (e) [label = below:$e$] at (2,-1) {};

        \begin{scope}[>=stealth]
          \draw (c) edge node {} (a);
          \draw (a) edge node {} (d);
          \draw (b) edge node {} (c);
          \draw (c) edge node {} (d);
          \draw (d) edge node {} (e);

          \draw (a) edge[->] node {} (b);
          \draw (a) edge[->] node {} (e);
          \draw (b) edge[bend right] node {} (d);
          \draw (c) edge[bend right] node {} (e);
        \end{scope}
    \end{tikzpicture}
    \\
  \end{tabular}
  \label{fig:nstrict-simple-counter}
\end{figure}
Since $a$ is not isolated in the reachability graph, it has at least one
neighbor in~$\mathcal{H}$. Vertices $b$ and $e$ cannot be such
neighbors, the arc being oneway in the reachability graph, so its
neighbors are either $c$, $d$, or both $c$ and $d$. \textit{Wlog},
assume that $c$ is a neighbor (the arguments hold symmetrically for
$d$), we first prove an intermediate statement
\begin{claim}
  \label{claim:no-bd}
  The edge $bd$ does not exists in the footprint of $\mathcal{H}$.
\end{claim}
\begin{proof}[Proof of Claim~\ref{claim:no-bd} (by contradition)]
  If $bd \in \mathcal{H}$, then $de \notin \mathcal{H}$, as
  otherwise $b$ and $e$ would be at distance $2$ and share at least
  one arc in the reachability graph (Lemma~\ref{lem:distance-3}).
  However, $e$ must have at least one neighbor, thus
  $ce \in \mathcal{H}$, and by Lemma~\ref{lem:distance-3} again
  $bc \notin \mathcal{H}$. At this point, the footprint of $\mathcal{H}$
  must look like the following graph, in which the status of $ad$
  and $cd$ is not settled yet.
  \begin{center}
    \begin{tikzpicture}[every loop/.style={}, thick, node distance=1.2cm,auto]
      \tikzset{vertex/.style={draw, circle, inner sep=0.55mm}}

      \node[vertex] (a) [label = $a$] at (0,0) {};
      \node[vertex] (b) [label = below:$b$] at (-2,-1) {};
      \node[vertex] (c) [label = below:$c$] at (-1,-1) {};
      \node[vertex] (d) [label = below:$d$] at (1,-1) {};
      \node[vertex] (e) [label = below:$e$] at (2,-1) {};

      \begin{scope}[>=stealth]
        \draw (c) edge node {} (a);
        \draw (a) edge[dashed] node {} (d);
        \draw (c) edge[dashed]  node {} (d);

        \draw (b) edge[bend right] node {} (d);
        \draw (c) edge[bend right] node {} (e);
      \end{scope}
    \end{tikzpicture}
  \end{center}
  In fact, $ad$ must exist, as otherwise there is no way of connecting
  $d$ to $a$ \emph{and} $a$ to~$d$. Also note that the absence of
  $(e,a)$ in the reachability graph forces $\lambda(ac) < \lambda(ce)$ (remember
  that $\mathcal{H}$ is both proper and simple), which implies that no
  journey exists from $e$ to $d$ unless $cd$ is also added to
  $\mathcal{H}$ with a label $\lambda(cd) > \lambda(ce)$. In the
  opposite direction, $d$ needs that $\lambda(ad) < \lambda(ac)$ to be
  able reach $e$. Now, $c$ needs that $\lambda(cd) < \lambda(bd)$
  to reach $b$. In summary, we must have
  $\lambda(ad) < \lambda(ac) < \lambda(ce) < \lambda(cd) < \lambda(bd)$,
  which implies that $b$ cannot reach~$c$.
\end{proof}

By this claim, $bd \notin \mathcal{H}$, thus
$bc \in \mathcal{H}$ and consequently $cd \notin \mathcal{H}$ (by
Lemma~\ref{lem:distance-3}). From the absence of $(b,a)$ in the
reachability graph, we infer that $\lambda(bc) > \lambda(ac)$. In order for $b$
to reach $d$, we need that $cd$ exists with label
$\lambda(cd) > \lambda(bc)$. To make $d$ to $b$ mutually reachable,
there must be an edge $ad$ with time $\lambda(ad) < \lambda(ac)$.
Now, the only way for $c$ to reach $e$ is through the edge $de$,
and since there is no arc $(e,a)$, its label must satisfy
$\lambda(de) > \lambda(ad)$. Finally, $c$ can reach $e$ (but not
through $a$), so $\lambda(de) > \lambda(cd)$ and $c$
cannot reach $e$, a contradiction.
\end{proof}

By inclusion of happy graphs in the ``simple \& non-strict'' setting, we have

\begin{corollary}
  \label{cor:happy<simple-non-strict}
  The ``simple \& proper (i.e. happy)'' setting is strictly less expressive than the ``simple \& non-strict'' setting in terms of reachability graphs.
\end{corollary}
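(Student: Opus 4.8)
The plan is to establish the two directions that together yield strict separation. First I would prove the ``at most as expressive'' direction: every reachability graph realizable by a happy graph is also realizable by some graph in the ``simple \& non-strict'' setting. Then the strictness is immediate from Lemma~\ref{lemma:ft-non-strict-simple-happy}, which exhibits a reachability graph realizable in ``simple \& non-strict'' but not in ``happy''.

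For the containment direction, I would argue as follows. Let $\mathcal{H}$ be any happy graph; by definition it is both simple and proper. Since simple graphs are a particular case of non-simple graphs, $\mathcal{H}$ is a legitimate instance of the ``simple \& non-strict'' setting, so it suffices to check that its reachability graph does not change when we pass from the happy reading to the non-strict reading of the \emph{same} labeled graph. This is where properness does the work: because adjacent edges of $\mathcal{H}$ never share a label, every journey is forced to be strict, so allowing non-strict journeys introduces no new journeys and hence no new arcs. Consequently $\closure(\mathcal{H})$ computed in the ``simple \& non-strict'' setting coincides with $\closure(\mathcal{H})$ in the ``happy'' setting, and the same reachability graph is realized in both. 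Combined with Lemma~\ref{lemma:ft-non-strict-simple-happy}, this gives a reachability graph separating the two settings.

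The step requiring care is precisely this containment, in light of the earlier warning that strict paths being a special case of non-strict paths does \emph{not}, in general, yield an inclusion of the corresponding settings: passing a non-proper graph from the strict to the non-strict reading can create additional reachability, which is exactly why the general strict and non-strict settings are incomparable. The observation that rescues the argument here is that this pathology cannot occur for \emph{proper} graphs, so the inclusion holds for happy graphs specifically. With the containment in hand and the strictness supplied by Lemma~\ref{lemma:ft-non-strict-simple-happy}, the corollary follows.
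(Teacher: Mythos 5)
Your proposal is correct and matches the paper's argument: the paper derives the corollary from Lemma~\ref{lemma:ft-non-strict-simple-happy} together with the inclusion of happy graphs in the ``simple \& non-strict'' setting, which is exactly your containment-plus-separation structure. Your only addition is to make explicit why the inclusion is sound at the level of reachability (properness forces all journeys to be strict, so the non-strict reading adds no arcs), a point the paper leaves implicit but which is precisely the right justification given its earlier caveat about strict versus non-strict settings.
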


\subsection{Transformations}
\label{sec:transformations}

In this section, we present three transformations. First, we present a
transformation from the general non-strict setting to the
setting of proper graphs, called the \emph{dilation technique}.
Since proper graphs are contained in both the non-strict and
strict settings, this transformation implies that the strict setting
is at least as expressive as the non-strict setting. This transformation is
\emph{support-preserving}, but it suffers from a significant
blow-up in the size of the lifetime. Another transformation called the
\emph{saturation technique} is presented from the (general) non-strict
setting to the (general) strict setting, which is only
\emph{reachability-preserving} but preserves the size of the lifetime.
Finally, we present an induced-reachability-preserving transformation, called
the \emph{semaphore technique}, from
the general strict setting to happy graphs. If the original
temporal graph is non-strict, one can compose it with one of the
first two transformations, implying that \emph{all} temporal graphs
can be turned into a happy graph whose reachability graph
contains that of the original temporal graph as an induced
subgraph. This shows that happy graphs are universal in a weak (in fact, induced) sense.

\subsubsection{Dilation technique: ``non-strict'' $\to$ ``proper''}

Given a temporal graph $\mathcal{G}$ in the non-strict setting, we present a transformation that creates a proper temporal graph $\mathcal{H}$ that is support-equivalent to $\G$ (and thus also reachability-equivalent).
We refer to this transformation as the \emph{dilation} technique.

The transformation operates at the level of the snapshots, taken independently, one after the other. It consists of isolating, in turn, every snapshot $G_t$ where some non-strict journeys are possible, and ``dilating'' it over more time steps in such a way these journeys can be made strict (note that this needs be applied only if $G_t$ contains at least one path of length larger than $1$). The subsequent snapshots are shifted in time accordingly.
The dilation of a snapshot $G_t$ goes as follows. Without loss of generality, assume that $t=1$ (otherwise, shift the labels used below by the sum of lifetimes resulting from the dilation of the earlier snapshots). First, we transform $G_t$ into a non-proper temporal graph $\G_t$ whose footprint is $G_t$ itself, and the edges of which are assigned labels $1,2,..., k$, where $k$ is the longest path in $G_t$ (with $k \le |V| - 1$). As argued in the proof below, there is a strict journey in $\G_t$ if and only if there is a path (and thus a non-strict journey) in $G_t$.
Now, $\G_t$ can be turned into a proper graph as follows.
By Vizing's theorem, the edges of a graph of maximum degree $\Delta$ can be properly colored using at most $\Delta + 1$ colors. Let $c:E_t \to [0,\Delta]$ be such a coloring, and let $\epsilon$ be a fixed value less than $1/(\Delta+1)$. Each label of each edge $e$ of $E_t$ is ``tilted'' in $\G_t$ by a quantity equal to $c(e)\epsilon$.
The transformation is illustrated in Figure~\ref{fig:shift-method}.

\begin{figure}[ht]
  \centering
  \begin{tabular}{cccc}
    \begin{tikzpicture}[xscale=.8,every loop/.style={}, thick, node distance=1.2cm,auto]
        \tikzset{vertex/.style={draw, circle, inner sep=0.55mm}}
        \tikzstyle{path} = [draw,line width=2pt, dashdotted,-,red]

        \node[vertex] (a) at (2,2) {}; 
        \node[vertex] (b) at (0,2) {};
        \node[vertex] (c) at (0,0) {};
        \node[vertex] (e) at (2,0) {};
        \node[vertex] (f) at (1,-0.5) {};

        \begin{scope}
        \draw (b) edge[above, sloped]  node {\footnotesize $1,3$} (a);
        \draw (c) edge[above, sloped] node {\footnotesize $1,2$} (b);
        \draw (c) edge node {\footnotesize $1$} (e);
        \draw (f) edge[below, sloped] node {\footnotesize $3,4$} (c);
        \draw (e) edge node {\footnotesize $2$} (f);

        \end{scope}
    \end{tikzpicture}
    &
    \begin{tikzpicture}[xscale=.8, every loop/.style={}, thick, node distance=1.2cm,auto]
        \tikzset{vertex/.style={draw, circle, inner sep=0.55mm}}
        \tikzstyle{path} = [draw,line width=2pt, dashdotted,-,red]

        \node[vertex] (a) at (2,2) {}; 
        \node[vertex] (b) at (0,2) {};
        \node[vertex] (c) at (0,0) {};
        \node[vertex] (e) at (2,0) {};
        \node[vertex] (f) at (1,-0.5) {};
        \path (f)+(0,-.3) coordinate (bidon);

        \begin{scope}
          \draw (b) edge[above, sloped] (a);
          \draw (c) edge[above, sloped] (b);
          \draw (c) edge[above, sloped] (e);


        \end{scope}
    \end{tikzpicture}
    &
    \begin{tikzpicture}[xscale=.8, every loop/.style={}, thick, node distance=1.2cm,auto]
        \tikzset{vertex/.style={draw, circle, inner sep=0.55mm}}
        \tikzstyle{path} = [draw,line width=2pt, dashdotted,-,red]

        \node[vertex] (a) at (2,2) {}; 
        \node[vertex] (b) at (0,2) {};
        \node[vertex] (c) at (0,0) {};
        \node[vertex] (e) at (2,0) {};
        \node[vertex] (f) at (1,-0.5) {};
        \path (f)+(0,-.3) coordinate (bidon);

        \begin{scope}
          \draw (b) edge[above, sloped] node {\footnotesize $1,2,3$} (a);
          \draw (c) edge[above, sloped] node {\footnotesize $1$$+$$\epsilon,\ 2$$+$$\epsilon,\ 3$$+$$\epsilon$} (b);
          \draw (c) edge[above, sloped] node {\footnotesize $1,2,3$} (e);


        \end{scope}
    \end{tikzpicture}
    &
    \begin{tikzpicture}[xscale=.8, every loop/.style={}, thick, node distance=1.2cm,auto]
        \tikzset{vertex/.style={draw, circle, inner sep=0.55mm}}
        \tikzstyle{path} = [draw,line width=2pt, dashdotted,-,red]

        \node[vertex] (a) at (2,2) {}; 
        \node[vertex] (b) at (0,2) {};
        \node[vertex] (c) at (0,0) {};
        \node[vertex] (e) at (2,0) {};
        \node[vertex] (f) at (1,-0.5) {};

        \begin{scope}
        \draw (b) edge[above, sloped] node {\footnotesize  $1,3,5,8$} (a);
        \draw (c) edge[below, sloped] node {\footnotesize  $2,4,6,7$} (b);
        \draw (c) edge[above, sloped] node {\footnotesize  $1,3,5$} (e);
        \draw (f) edge[below, sloped] node {\footnotesize $8,9$} (c);
        \draw (e) edge node {\footnotesize $7$} (f);
        \end{scope}
    \end{tikzpicture}
    \\

    $\mathcal{G}$ & $G_1$ & $\mathcal{G}_1$ & $\mathcal{H}$ \\
  \end{tabular}

  \caption{Dilation of the labels of $\mathcal{G}$. Here, a single snapshot, namely $G_1$ contains paths whose length is larger than $1$, so the dilation is only applied to $G_1$. The transformed snapshot $\G_1$ has $6$ labels (instead of $1$). It is then recomposed with the other snapshots, whose time labels are shifted accordingly (by $5$ time unit), resulting in graph $\mathcal{H}$.}
  \label{fig:shift-method}
\end{figure}
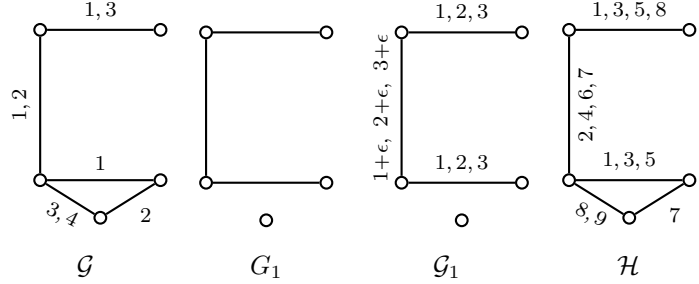

\begin{lemma}[Dilation technique]
  \label{lemma:time-split}
  Given a temporal graph $\G$, the dilation technique transforms $\mathcal{G}$ into a proper temporal graph $\mathcal{H}$ such that there is a non-strict journey in $\mathcal{G}$ if and only if there is a strict journey in $\mathcal{H}$ with same support.
\end{lemma}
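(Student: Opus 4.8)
The plan is to prove both directions of the equivalence by first analysing a single dilated snapshot and then composing the snapshots, which after shifting occupy pairwise disjoint, ordered time windows. First I would record the structural fact underlying the whole construction: since journeys are simple paths in the footprint and their times are non-decreasing, any non-strict journey in $\mathcal{G}$ splits canonically into maximal runs of contacts sharing a common time $t$, and each such run is a simple path inside the snapshot $G_t$ (a sub-path of the journey, using only edges present at time $t$). Because the dilation places every dilated snapshot $G_t$ into its own time window $I_t$, with $I_t$ entirely before $I_{t'}$ whenever $t<t'$, it suffices to establish that (i) each within-snapshot path becomes a strict journey with the same support inside the corresponding window of $\mathcal{H}$, and (ii) the concatenation of these strict segments across windows is itself strict; the converse decomposition then reads off symmetrically.

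For the single-snapshot core I would prove the claim announced in the text: $\mathcal{G}_t$ admits a strict journey with a given support if and only if $G_t$ contains the corresponding path. For the forward direction, given a path $w_0,\dots,w_\ell$ in $G_t$ with $\ell\le k$, traverse its $i$-th edge $e_i$ at time $i+c(e_i)\epsilon$; this is well defined because every edge of $\mathcal{G}_t$ carries all labels $1,\dots,k$ before tilting. Strictness follows from the estimate $(i+1)+c(e_{i+1})\epsilon-(i+c(e_i)\epsilon)\ge 1-\Delta\epsilon>1-\tfrac{\Delta}{\Delta+1}>0$, so consecutive times strictly increase despite the tilt. The reverse direction is immediate, since a strict journey in $\mathcal{G}_t$ is by definition a path in the footprint $G_t$, and reading off its vertices returns the path; in both cases the vertex sequence, i.e. the support, is preserved.

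Next I would verify that $\mathcal{H}$ is genuinely proper, so that the output lies in the claimed setting. The footprint of $\mathcal{H}$ equals that of $\mathcal{G}$, so two adjacent edges either lie in distinct snapshots, in which case their labels sit in disjoint windows and cannot collide, or in a common snapshot. A snapshot is dilated only when it contains a path of length larger than $1$; a non-dilated snapshot is a matching and hence hosts no two adjacent edges, while in a dilated snapshot adjacency forces distinct colours $c(e)\ne c(e')$, and an equality $j+c(e)\epsilon=j'+c(e')\epsilon$ would require the integer $j-j'$ to equal $(c(e')-c(e))\epsilon$, whose absolute value is nonzero yet strictly below $1$ — a contradiction. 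Thus properness holds globally.

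Finally, the composition. Given a non-strict journey in $\mathcal{G}$, replace each same-time run by its strict realisation in the corresponding window and concatenate; since the windows are ordered, the last time of each segment precedes the first time of the next, so the whole sequence is strict with the same overall support. Conversely, a strict journey in $\mathcal{H}$ visits the windows in increasing order (an interval cannot be re-entered once the times exceed it); restricting to each window yields a strict journey in some $\mathcal{G}_t$, hence a path in $G_t$, and mapping every contact back to its original time $t$ produces a sequence whose times are constant within a window and increasing across windows, i.e. a non-strict journey in $\mathcal{G}$ with the same support. The main obstacle I anticipate is not any single step but the bookkeeping around the fractional tilts: one must check simultaneously that the tilt is large enough to realise properness yet small enough never to reorder the integer parts of two consecutive labels, which is exactly what the choice $\epsilon<1/(\Delta+1)$ guarantees.
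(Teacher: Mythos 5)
Your proof is correct and follows essentially the same route as the paper's: per-snapshot dilation realizing each path as a strict journey over labels $1,\dots,\ell$ up to the $\epsilon$-tilts from a proper edge coloring, properness from the disjoint ordered time windows together with the tilt construction, and composition of journeys across windows. You merely make explicit several details the paper leaves implicit, such as the decomposition of a non-strict journey into maximal same-time runs, the strictness estimate $1-\Delta\epsilon>0$, the integer-versus-tilt collision argument, and the observation that non-dilated snapshots are matchings.
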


\begin{proof}
  (Properness.) Every snapshot is dealt with independently and chronologically, the subsequent snapshots being shifted as needed to occur after the transformed version of the formers, so each snapshot becomes a temporal subgraph of $\mathcal{H}$ whose lifetime occupies a distinct subinterval of the lifetime of $\mathcal{H}$. Moreover, the tilting method based on a proper coloring of the edges guarantees that each of these temporal graphs is proper. Thus, $\mathcal{H}$ is proper. \\
  (Preservation of journeys.) Given a snapshot $G_t$ considered independently, the longest path in $G$ has length $k \le |V|-1$ and every edge has all the labels from $1$ to $k$, so for every path of length $\ell$ in $G$, there is a \emph{strict} journey in this graph, along the same sequence of edges, going over labels $1, 2, ..., \ell$ (up to the tilts, which are all less than $1$). Moreover, if a journey exists in $\G_t$, then its underlying path also exists in $G_t$ (since $G_t$ is the footprint of $\G_t$), thus, the dilation of a snapshot is support-preserving.
  Finally, as all the snapshots occupy a distinct subinterval of the lifetime of $\mathcal{H}$, and the order among snapshots is preserved, the composability of journeys over different snapshots is also unaffected.
\end{proof}

Let us clarify a few additional properties of the transformation. In particular,

\begin{lemma}
  The running time of the dilation technique is polynomial. 
\end{lemma}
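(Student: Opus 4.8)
The plan is to bound the running time of the dilation technique by accounting for each operation the transformation performs and showing each is polynomial in the size of the input temporal graph $\mathcal{G} = (V, E, \lambda)$. Let $n = |V|$, let $m = |E|$, and let $N = \sum_{e \in E} |\lambda(e)|$ be the total number of contacts (which is the natural measure of the input size). I would first argue that the transformation touches each snapshot exactly once and processes them independently, so the total work is the sum over all distinct time steps $t$ appearing in $\lambda$ of the work done dilating $G_t$. Since each contact belongs to exactly one snapshot, the number of nonempty snapshots is at most $N$, and the snapshots partition the $N$ contacts, so summing the per-snapshot cost will telescope cleanly into a polynomial bound.

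Next I would itemize the per-snapshot cost for a fixed $G_t$ with edge set $E_t$. The dilation of $G_t$ requires three ingredients, each computable in polynomial time: (i) determining $k$, the length of the longest path in $G_t$, which can be bounded by $|V|-1$ and for which we do not actually need the exact longest path — it suffices to assign every edge the labels $1, 2, \dots, |V|-1$, so this step costs $O(n \cdot |E_t|)$; (ii) computing a proper edge-coloring of $G_t$ using at most $\Delta+1$ colors, which is exactly the content of the constructive version of Vizing's theorem and which admits a well-known polynomial-time algorithm running in time $O(|E_t| \cdot n)$ (or better); and (iii) performing the tilt, i.e. adding $c(e)\epsilon$ to each of the (at most $|V|-1$) labels of each edge $e \in E_t$, which costs $O(n \cdot |E_t|)$. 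I would then note that the shifting of subsequent snapshots amounts to adding a fixed offset to each label and is absorbed into the relabeling cost.

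Summing ingredient (ii) over all snapshots gives $\sum_t O(|E_t| \cdot n) = O(N \cdot n)$ since $\sum_t |E_t| = N$, and ingredients (i) and (iii) each blow up the label count of every edge from $|\lambda(e)|$ to at most $|V|-1$ per snapshot, producing an output of size $O(N \cdot n)$ contacts, all generated in time polynomial in $N$ and $n$. Hence the overall running time, as well as the size of the output graph $\mathcal{H}$, is polynomial in the input size.

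The only point requiring slight care — and the step I would flag as the main (mild) obstacle — is the use of non-integer labels (the tilts $c(e)\epsilon$ with $\epsilon < 1/(\Delta+1)$). Strictly speaking one must confirm that these real-valued labels can be represented with polynomially many bits, or equivalently that one may clear denominators and rescale to integers without affecting the relative order of labels. This is routine: multiplying every label uniformly by $\Delta+1$ turns each tilt $c(e)\epsilon$ into an integer strictly between $0$ and $1$ after scaling, and since the relevant structure is purely the \emph{ordering} of labels along each edge and across adjacent edges, any order-preserving renormalization suffices. I would therefore close by remarking that the transformation can be implemented entirely with integer labels of polynomial magnitude, so every arithmetic operation is polynomial-time, which completes the bound.
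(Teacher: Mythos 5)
Your proposal is correct and follows essentially the same route as the paper's proof: process each snapshot independently, assign the $O(n)$ labels per edge, obtain a proper edge-coloring via a constructive version of Vizing's theorem (the paper cites the Misra--Gries algorithm), apply the tilts, and sum over snapshots -- your $O(Nn)$ bound in the number of contacts even matches the paper's own finer remark that the complexity is $O(kn)$ with $k$ the number of contacts. Your two added touches (observing that one need not compute the longest path, just use $n-1$ labels, and that the fractional tilts can be rescaled to integers of polynomial magnitude) are sound refinements of points the paper leaves implicit, not a different approach.
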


\begin{proof}
  For any reasonable representation of $\G$ in memory, one can easily
  isolate a particular snapshot by filtering the contacts for the
  corresponding label (if the representation is itself snapshot-based,
  this step is even more direct). Then, each snapshot has at most
  $O(n^2)$ edges, so assigning the $n$ required labels to each of them
  takes at most $O(n^3)$ operations. Using Misra and Gries coloring
  algorithm~\cite{MG92}, a proper coloring of the edges can be
  obtained in $O(n^3)$ time per snapshot. 
  Applying the tilt operation takes
  essentially one operation per label in the transformed snapshot, so
  $O(n^3)$ again. Finally, these operations must be performed for
  all snapshots, which incurs a additional global factor of $\tau$,
  resulting in a total running time of essentially $O(n^3\tau)$.
  (The exact running time may depend on the actual data structure. It could also be characterized more finely by considering the number of contacts~$k$ (temporal edges) as a parameter instead of the rough approximation above, leading to a complexity of $O(kn)$ time steps, e.g. with adjacency lists to describe the snapshots.)
\end{proof}

Finally, observe that the dilation technique may incur a significant
blow up in the lifetime of the graph. More precisely,

\begin{lemma}
  Let $\Delta$ be the maximum degree of a vertex in any of the snapshot.
  Then, $\tau_{\mathcal{H}} \leq \tau_{\G}(\Delta+1)(n-1)$.
\end{lemma}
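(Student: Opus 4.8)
The plan is to treat the lifetime of $\mathcal{H}$ as the sum, over all snapshots, of the number of distinct time steps each dilated snapshot occupies, and then to bound each of these contributions uniformly. First I would recall from the description of the dilation technique that snapshots are processed independently and placed in \emph{disjoint consecutive subintervals} of the lifetime of $\mathcal{H}$, the later snapshots being shifted to occur after the transformed versions of the earlier ones. Consequently, $\tau_{\mathcal{H}}$ equals the total length spanned by these disjoint subintervals, and it suffices to (i) bound the number of time steps a single dilated snapshot can use, and (ii) bound the number of snapshots.

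For step (i), fix a snapshot $G_t$ and let $\Delta_t$ be its maximum degree. The dilation first assigns to every edge the base labels $1,2,\dots,k$, where $k$ is the length of a longest path in $G_t$ and $k \le n-1$. It then tilts each edge $e$ by $c(e)\epsilon$, where $c$ is a proper edge-coloring using at most $\Delta_t+1$ colors (Vizing's theorem) and $\epsilon < 1/(\Delta_t+1)$. The key observation is that each base label $j \in \{1,\dots,k\}$ can give rise to at most $\Delta_t+1$ distinct tilted values, one per color, and every tilt lies in $[0,1)$ so all tilted values stay within $[1, k+1)$. Hence the set of distinct (tilted) labels occurring in the dilated snapshot has size at most $k(\Delta_t+1) \le (n-1)(\Delta_t+1)$. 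After the order-preserving relabeling that maps these real values back to consecutive integers, the dilated snapshot therefore occupies at most $(n-1)(\Delta_t+1)$ integer time steps. Using the global quantity $\Delta \ge \Delta_t$ for every $t$ yields the uniform per-snapshot bound $(n-1)(\Delta+1)$.

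For step (ii), the range of $\lambda$ has length $\tau_{\G}$, so there are at most $\tau_{\G}$ non-empty snapshots to process. Summing the disjoint contributions, each at most $(n-1)(\Delta+1)$, gives $\tau_{\mathcal{H}} \le \tau_{\G}(\Delta+1)(n-1)$, as claimed. Note that snapshots containing no path of length larger than $1$ are left untouched and contribute a single time step, so the bound is loose in general; this only works in our favour.

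The only point demanding care is the passage from the real-valued tilted labels to integer time steps: the estimate above is genuinely a bound on the number of \emph{distinct label values}, and one must check that the order-preserving renormalization neither creates nor merges time steps beyond that count, and that replacing the per-snapshot degree $\Delta_t$ by the global $\Delta$ is a legitimate (if wasteful) over-estimate. Both are immediate, so there is no real obstacle here — the statement is essentially a counting argument built on top of the already-established correctness of the dilation technique (Lemma~\ref{lemma:time-split}).
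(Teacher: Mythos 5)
Your proof is correct and takes essentially the same route as the paper's, which compresses the same counting into one sentence: there are $\tau_{\G}$ snapshots, each dilated into at most $n-1$ nominal labels tilted in at most $\Delta+1$ ways, occupying disjoint subintervals of the lifetime of $\mathcal{H}$. Your added care about distinct tilted values and the order-preserving renormalization to integers is a harmless elaboration of what the paper leaves implicit.
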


\begin{proof}
  There are $\tau_\G$ snapshots, each one can be turned into a
  temporal graph that uses up to $n-1$ nominal labels tilted in
  $\Delta+1$ different ways.
\end{proof}


\subsubsection{Saturation technique: ``non-strict'' $\to$ ``strict''}

As already explained, the dilation transformation described above makes it possible to transform any temporal graph in the non-strict setting into a proper graph, which is \emph{de facto} included in both the strict and non-strict setting. Thus, it can be seen as a support-preserving transformation from the non-strict setting to the strict setting, at the cost of a significant blow-up of the lifetime. In this section, we present a weaker transformation called the \emph{saturation} method. This transformation is only reachability-preserving, but it keeps the lifetime constant. An additional benefit is that it is pretty simple. A similar technique was used in~\cite{BCC+14} to test the temporal connectivity of a temporal graph with non-strict journeys.

\begin{theorem}
  Let $\mathcal{G}$ be a temporal graph with $n$ vertices, $m$ contacts and a lifetime of size~$\tau$, considered in the non-strict setting. There exists a temporal graph $\mathcal{H}$ with $n$ vertices, lifetime $\tau$ and at most $(n(n+1)\tau) / 2$ contacts in the strict setting that results in the same reachability graph.
\end{theorem}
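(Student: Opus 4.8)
The plan is to \emph{saturate} each snapshot independently, replacing it by a disjoint union of cliques. Concretely, I would build $\mathcal{H}=(V,E',\lambda')$ on the same vertex set and the same label values as $\G$ by the following rule: for every time $t$ in the lifetime of $\G$ and every connected component $C$ of the snapshot $G_t$, add a contact $(uv,t)$ for every pair $u,v \in C$. Equivalently, within each snapshot I take the adjacency closure (the ``cluster graph''), so that each component of $G_t$ becomes a clique whose edges all carry the single label $t$. No new label value is introduced, so the lifetime stays equal to $\tau$, and $\mathcal{H}$ is then read in the \emph{strict} setting.

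The key observation linking the two settings is that, in the non-strict setting, a vertex $u$ reaches a vertex $v$ \emph{using only contacts at a single time $t$} if and only if $u$ and $v$ lie in the same connected component of $G_t$ --- in which case a single clique contact $(uv,t)$ of $\mathcal{H}$ realises that hop as a \emph{strict} journey of length one. To handle full journeys I would decompose any non-strict journey of $\G$ into maximal blocks of contacts sharing a common time. After merging, the block times $t_1 < t_2 < \cdots < t_k$ are \emph{strictly} increasing (non-decreasing by definition, and distinct by maximality), and the endpoints $w_{i-1},w_i$ of the $i$-th block lie in the same component of $G_{t_i}$. Replacing each block by the corresponding clique contact $(w_{i-1}w_i, t_i)$ yields a journey in $\mathcal{H}$ with strictly increasing times, hence a valid strict journey with the same source and target. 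This shows $\closure(\G) \subseteq \closure(\mathcal{H})$.

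For the reverse inclusion I would take any strict journey of $\mathcal{H}$: each of its contacts is, by construction, a clique edge $(w_{i-1}w_i, t_i)$ whose endpoints share a component of $G_{t_i}$, so $w_{i-1}$ reaches $w_i$ by a non-strict journey confined to snapshot $t_i$ in $\G$. As the times $t_1 < t_2 < \cdots < t_k$ are strictly increasing, these snapshot-confined journeys compose into a single non-strict journey of $\G$ from source to target, giving $\closure(\mathcal{H}) \subseteq \closure(\G)$; together with the previous paragraph this proves that both graphs have the same reachability graph. For the size bound, each snapshot contributes $\sum_C \binom{|C|}{2} \le \binom{n}{2} = n(n-1)/2$ contacts, the sum being over the components $C$ of $G_t$; summing over the $\tau$ snapshots gives at most $\tau\, n(n-1)/2 \le n(n+1)\tau/2$ contacts, as claimed. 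The step requiring the most care --- and where non-strictness is genuinely exploited --- is the two-way equivalence of the middle paragraph: one must verify both that maximality of the blocks forces the block times to be \emph{strictly} increasing (so the emulating journey is strict and no reachability is lost), and that clique contacts create \emph{no} reachability beyond what same-component membership already guarantees in $\G$ (so no spurious arc is added).
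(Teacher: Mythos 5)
Your proposal is correct and is essentially the paper's own \emph{saturation technique}: replacing each connected component of every snapshot by a clique with the same label is exactly the paper's per-snapshot (path-based) transitive closure, and your two directions mirror its journey correspondence. Your maximal-block decomposition is just a more explicit spelling of the paper's ``replace each same-time sub-path by a single contact and repeat'' argument, and your count $\sum_C \binom{|C|}{2}\tau \le n(n-1)\tau/2$ even slightly sharpens the stated bound of $n(n+1)\tau/2$.
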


\begin{proof}
  Let $\G$ be seen as a sequence of snapshots $G_1,..., G_\tau$. The transformation consists of transforming independently every snapshot $G_i$ of $\mathcal{G}$ by turning every path of $G_i$ into an edge. In other words, turning each snapshot $G_i$ into its own (path-based) transitive closure. The resulting graph $\mathcal{H}=H_1,..., H_\tau$ has the same lifetime as $\G$ and the same set of vertices.
  We will now prove that there is a (non-strict) journey from $u$ to $v$ in $\mathcal{G}$ if and only if there is a strict journey from $u$ to $v$ in $\mathcal{H}$.

    ($\to$) Let $j$ be a journey in $\mathcal{G}$. If any part of $j$ uses consecutive edges at the same time step $t$ (say, from $a$ to $b$), then there is a corresponding path in the snapshot $G_t$, implying an edge $ab$ in $H_t$, thus, this part of $j$ can be replaced by a contact $(\{a,b\},t)$ in $\mathcal{H}$. Repeating the argument implies a strict journey.

    ($\leftarrow$) Let $j'$ be a journey in $\mathcal{H}$. By construction of $\mathcal{H}$, for any contact $(\{a,b\},t)$ in $j'$, either the same contact already exists in $\G$, or there exists a path between $a$ and $b$ in $G_t$. If non-strict journeys are allowed, this path can replace the contact. Repeating the argument implies a non-strict journey.
\end{proof}

\subsubsection{Semaphore technique: ``strict'' $\to$ ``simple \& proper'' (happy)}

In this section, we describe a transformation called the \emph{semaphore technique}, which transforms any graph in the strict setting into a happy graph, while preserving the reachability between the vertices of the input graph, thus this transformation implies an induced-reachability equivalence.
  Our transformation is inspired by a reduction due to Bhadra and Ferreira~\cite{BF03} that reduces the \textsc{Clique} problem to the problem of finding maximum components in temporal graphs. However, their reduction takes as input a graph that is (morally) simple, and produces temporal graphs that are neither simple nor proper. Thus, our transformation differs significantly.

  \begin{theorem}
    \label{th:semaphore}
Let $\mathcal{G}$ be a temporal graph with $n$ vertices and $m$ contacts in the strict setting. There exists a happy graph $\mathcal{H}$ with $n+2m$ vertices and $4m$ edges and a mapping $\sigma: V_\G \to V_\mathcal{H}$ such that $(u,v) \in \closure(\G)$ if and only if $(\sigma(u), \sigma(v)) \in \closure(\mathcal{H})$.
\end{theorem}

\begin{proof}
  Intuitively, the transformation consists of turning every contact of $\G$, say $x = (\{u,v\},t_x)$, into a ``semaphore'' gadget in $\mathcal{H}$ that consists of a copy of $u$ and $v$, plus two auxiliary vertices $u_x$ and $v_x$ linked by~$4$ edges $\{u,u_x\}$, $\{u_x,v\}$, $\{u,v_x\}$, $\{v_x,v\}$, whose labels create a journey from $u$ to $v$ through $u_x$, and from $v$ to $u$ through $v_x$. The labels are chosen in such a way that these journeys can replace $x$ for the composition of journeys in $\mathcal{H}$. For simplicity, our construction uses fractional label values, which can subsequently be renormalized into integers. A basic example is shown in Figure~\ref{fig:diamond-method}.

\begin{figure}[ht]
  \centering
  \begin{tabular}{cc}
    \begin{tikzpicture}[every loop/.style={}, thick, node distance=1.2cm,auto]
        \tikzset{vertex/.style={draw, circle, inner sep=0.55mm}}

        \node[vertex] (u) [label = $u$] at (0,0) {};
        \node[vertex] (v) [label = $v$] at (1,0) {};
        \node[vertex] (w) [label = $w$] at (2,0) {};

        \begin{scope}
        \draw (u) edge node {1,2} (v);
        \draw (v) edge node {1} (w);
        \end{scope}
    \end{tikzpicture}
    &
    \begin{tikzpicture}[every loop/.style={}, thick, node distance=1.2cm,auto]
        \tikzset{vertex/.style={draw, circle, inner sep=0.55mm}}

        \node[vertex] (u) [label = left:$u$] at (0,0.5) {};
        \node[vertex] (u1) at (1.75,-1) {};
        \node[vertex] (u1') at (1.75,-0.4) {};
        \node[vertex] (u2) at (1.75,1.4) {};
        \node[vertex] (u2') at (1.75,2) {};
        \node[vertex] (v) [label = $v$] at (3.5,0.5) {};
        \node[vertex] (v1) at (4.75,0) {};
        \node[vertex] (v1') at (4.75,1) {};
        \node[vertex] (w) [label = right:$w$] at (6,0.5) {};

        \begin{scope}
          \draw (u) edge node[below, sloped] {\footnotesize $1-\epsilon$} (u1);
          \draw (u1) edge node[below, sloped] {\footnotesize  $1+\epsilon$} (v);
          \draw (u) edge node[above, sloped] {\footnotesize $1+\epsilon$} (u1');
          \draw (u1') edge node[above, sloped] {\footnotesize $1-\epsilon$} (v);

          \draw (u) edge node[below, sloped] {\footnotesize $2-\epsilon$} (u2);
          \draw (u2) edge node[below, sloped] {\footnotesize $2+\epsilon$} (v);
          \draw (u) edge node[above, sloped] {\footnotesize $2+\epsilon$} (u2');
          \draw (u2') edge node[above, sloped] {\footnotesize $2-\epsilon$} (v);

          \draw (v) edge[below, sloped] node {\footnotesize $1-2\epsilon$} (v1);
          \draw (v1) edge[below, sloped] node {\footnotesize $1+2\epsilon$} (w);
          \draw (v) edge node[above, sloped] {\footnotesize $1+2\epsilon$} (v1');
          \draw (v1') edge node[above, sloped] {\footnotesize $1-2\epsilon$} (w);
        \end{scope}
    \end{tikzpicture}
    \\
    $\mathcal{G}$ & $\mathcal{H}$ \\
  \end{tabular}
  \caption{The semaphore technique, turning a non-proper graph $\mathcal{G}$ (in the strict setting), into a happy graph $\mathcal{H}$ whose reachability preserves the relation among original vertices.}
  \label{fig:diamond-method}
\end{figure}
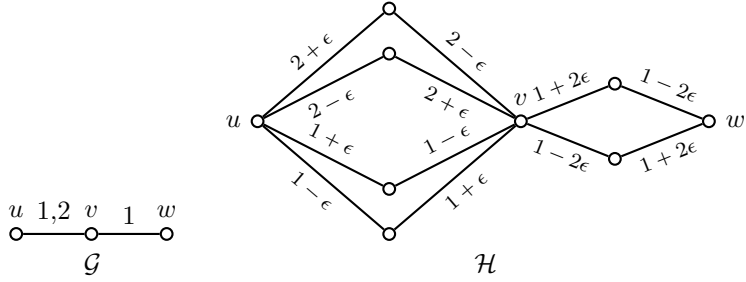

Since we consider strict journeys in $\mathcal{G}$, if two adjacent time edges share the same time label, it should be forbidden to take the two of them consecutively. To ensure this, the time labels of $\{u,u_x\},\{u_x,v\},\{u,v_x\},\{v_x,v\}$ are respectively $t_x - \epsilon$, $t_x + \epsilon, t_x + \epsilon$, $t_x - \epsilon$ (with $0 < \epsilon < 1/2$), enabling the journeys from $u$ to $v$ and from $v$ to $u$ without making two such journeys composable if the two original labels are the same. The created edges have a single label, but the graph might not yet be proper. To ensure properness, we tilt slightly the time labels by multiples of $\epsilon$, in a similar spirit as in the dilation technique presented above. More precisely,
consider a proper edge-coloring of the footprint of $\mathcal{G}$ using $\Delta+1$ colors in $\{1,...,\Delta+1\}$ (where $\Delta$ is the maximum degree in the footprint), such a coloring being guaranteed by Vizing's theorem. For each edge $e$ of the footprint, note $c_e$ its color. Now the time labels in $\mathcal{H}$ associated to $x = (e,t_x)$ are $t_x - c_e \epsilon$, $t_e + c_e \epsilon$ with $0 < \epsilon < \frac{1}{2(\Delta+1)}$.

The semaphore gadget is applied for each contact of $\G$ with the corresponding color, so if there was $n$ vertices and $m$ time edges in $\mathcal{G}$, then $\mathcal{H}$ will have $n+2m$ vertices and~$4m$ time edges. It is easy to see that $\mathcal{H}$ is now simple and proper.
We will now prove that, for any $u$ and $v$,  $u$ can reach $v$ in $\G$ if and only if $u$ can reach $v$ in $\mathcal{H}$.

($\to$) First note that for any pair of adjacent vertices $u,v\in V_{\G}$, we can go from $u$ to $v$ in $\mathcal{H}$ by following one side of the "semaphore" and from $v$ to $u$ by following the other side. Furthermore, for any two $u,v\in V_\G$ such that there is a strict journey from $u$ to $v$ in $\mathcal{G}$, there is a sequence of edges $e_1,e_2,\dots, e_k$ with labels $t_1,t_2,\dots, t_k$ in $\mathcal{G}$ such that $e_1$ is incident to $u$, $e_k$ is incident to $v$, $e_i$ is adjacent to $e_{i+1}$ and $t_i<t_{i+1}$. Then $u$ can reach $v$ in $\mathcal{H}$ by a sequence of edges with increasing labels $t_1-c_{e_1}\ep, t_1+c_{e_1}\ep, t_2-c_{e_2}\ep, t_2+c_{e_2}\ep,\dots, t_k-c_{e_k}\ep, t_k+c_{e_k}\ep$.

($\leftarrow$) First note that from any original vertex $a$, seen in $\mathcal{H}$, we must first move to an auxiliary vertex $b$ by the edge $e_1=\{a,b\}$ and then to another original vertex $c$ by edge $e_2=\{b,c\}$. The labels on the edges must be $t-c\ep$ and $t+c\ep$ for some $t$, where $t$ is the original time of the contact of $\mathcal{G}$ and $c$ the color of the underlying edge. Then, we again reach an auxiliary vertex $d$ from $c$ at time $t'-c'\ep$, where $t+c\ep<t'-c'\ep$, and then another original vertex $e$ at time $t'+c'\ep$. Since $t<t'$, we can go from $a$ to $e$ through $c$ in $\mathcal{G}$ using contacts $(\{a,c\}, t)$ and $(\{c,e\},t')$, respectively. Hence for all original $u$ and $v$, if we have a temporal path from $u$ to $v$ in $\mathcal{H}$, then we can follow the above process multiple times to get a temporal path from $u$ to $v$ in $\mathcal{G}$.
\end{proof}



Observe that the semaphore technique is, in a quite relaxed way, also support-preserving, in the sense that a journey in $\G$ can be mapped into a journey in $\mathcal{H}$ that traverses the \emph{original} vertices in the same order (and vice versa), albeit with auxiliary vertices in between.


\subsection{Summary and discussions}
\label{sec:summary}

Let $S_1$ and $S_2$ be two different settings, we define an order
relation $\preceq$ so that $S_1 \preceq S_2$ means that for any graph
$\G_1$ in $S_1$, one can find a graph $\G_2$ in $S_2$ such that
$\closure(\G_1) \simeq \closure(\G_2)$. We write $S_1 \precneq S_2$ if
the containment is proper (i.e., there is a graph in $S_2$ whose
reachability graph cannot be obtained from a graph in $S_1$). Finally,
we write $S_1 \approx S_2$ if both sets of reachability graphs
coincide. Several relations follow directly from containment among
graph classes, e.g. the fact that simple graphs are a particular case
of non-simple graphs. The above separations and transformations also
imply a number of relations, and their combination as well. For
example, proper graphs are contained both in the strict and non-strict
settings, and since there
is a transformation from non-strict graphs (in general) to proper
graphs, we have the following striking relation:

\begin{corollary}
  \label{corollary:ft-proper-non-strict}
  ``Proper'' $\approx$ ``non-strict''.
\end{corollary}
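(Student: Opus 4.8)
The plan is to establish the equivalence $\approx$ by proving the two containments $\text{``Proper''} \preceq \text{``non-strict''}$ and $\text{``non-strict''} \preceq \text{``Proper''}$ separately, and then to invoke the definition of $\approx$, which asks precisely that both classes of reachability graphs coincide.

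First I would handle the containment $\text{``Proper''} \preceq \text{``non-strict''}$, which I expect to be immediate from inclusion. Given any proper graph $\G_1$, I would simply take $\G_2 = \G_1$ and read it off in the non-strict setting. The only point needing a remark is that the reachability graph is unaffected by this change of viewpoint: since $\G_1$ is proper, adjacent edges never share a label, so every journey is automatically strict, and therefore the set of non-strict journeys of $\G_1$ coincides with its set of strict journeys. Hence $\closure(\G_1)$ is the same in both settings, and the reachability graph of $\G_1$ is trivially realized by a non-strict graph.

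Next I would handle the reverse containment $\text{``non-strict''} \preceq \text{``Proper''}$, which is where the real content lies, but which is already supplied by the dilation technique. Given any temporal graph $\G_1$ in the non-strict setting, Lemma~\ref{lemma:time-split} produces a proper graph $\mathcal{H}$ that is support-equivalent to $\G_1$; since support equivalence implies reachability equivalence (as observed right after the definitions), we get $\closure(\G_1) \simeq \closure(\mathcal{H})$ with $\mathcal{H}$ proper. This witnesses $\G_1 \preceq \text{``Proper''}$, and as $\G_1$ was arbitrary, the containment follows.

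Combining the two containments yields that the two classes of reachability graphs coincide, i.e. $\text{``Proper''} \approx \text{``non-strict''}$. I do not expect any genuine obstacle: the nontrivial engineering has already been carried out in the dilation technique, and the corollary is essentially a repackaging of Lemma~\ref{lemma:time-split} together with the trivial inclusion of proper graphs in the non-strict setting. The one subtlety worth stating explicitly is the observation used in the first containment — that properness collapses the strict/non-strict distinction — so that a proper graph may be viewed in the non-strict setting without altering its reachability graph.
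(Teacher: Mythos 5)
Your proposal is correct and matches the paper's own argument: the paper likewise obtains ``Proper'' $\preceq$ ``non-strict'' from the containment of proper graphs in the non-strict setting (where properness collapses the strict/non-strict distinction), and the reverse containment from the dilation technique of Lemma~\ref{lemma:time-split}, whose support equivalence implies reachability equivalence. Your explicit remark that a proper graph's reachability graph is unchanged when read in the non-strict setting is the same observation the paper makes implicitly via class containment.
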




Similarly, combining the fact that ``simple \& non-strict'' is strictly contained in ``non-strict'' (by Corollary~\ref{cor:simple<non-strict}), and there exists a reachability-preserving (in fact, support-preserving) transformation from ``non-strict'' to ``proper'', we also have that

\begin{corollary}
  \label{cor:simple-nstrict<proper}
  ``Simple \& non-strict'' $\precneq$ ``proper''.
\end{corollary}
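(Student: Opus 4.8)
The plan is to establish the two halves of $\precneq$ separately: first the containment $\preceq$, then its strictness. For the containment, I would observe that every simple and non-strict graph is, in particular, a graph in the (general) non-strict setting, since simpleness is merely a special case of non-simpleness. Hence any reachability graph realizable in the ``simple \& non-strict'' setting is \emph{a fortiori} realizable in the ``non-strict'' setting, giving ``simple \& non-strict'' $\preceq$ ``non-strict''. Composing this with the equivalence ``non-strict'' $\approx$ ``proper'' from Corollary~\ref{corollary:ft-proper-non-strict} (itself a consequence of the dilation technique, Lemma~\ref{lemma:time-split}) and using the transitivity of $\preceq$, I obtain ``simple \& non-strict'' $\preceq$ ``proper''.

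For strictness of the containment, I would transport the separation already established between ``simple \& non-strict'' and ``non-strict''. By Corollary~\ref{cor:simple<non-strict}, there is a reachability graph realizable in the ``non-strict'' setting but not in the ``simple \& non-strict'' setting. Since ``non-strict'' $\approx$ ``proper'' means these two settings yield exactly the same set of reachability graphs, that same reachability graph is realizable in the ``proper'' setting while remaining unrealizable in ``simple \& non-strict''. This exhibits a reachability graph in ``proper'' that cannot be obtained from ``simple \& non-strict'', which is precisely what is required.

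Even more directly, the witness can be read off from the proof of Lemma~\ref{lemma:ft-non-strict-simple}: the graph $\G$ constructed there is already proper, and its reachability graph (the ``diamond'') is shown not to be realizable by any simple graph in the non-strict setting. Thus $\closure(\G)$ alone certifies strictness, bypassing the equivalence altogether; I would likely mention this as the most economical route.

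I expect the only delicate point to be the logical bookkeeping of combining a strict separation ($\precneq$) with an equivalence ($\approx$) to conclude a strict separation: one must verify that the separating reachability graph genuinely lands in the target setting ``proper'' and not merely in an isomorphic copy realized elsewhere. The equivalence ``non-strict'' $\approx$ ``proper'' settles this at the level of \emph{sets} of reachability graphs, so no new construction is needed and the argument reduces to a routine chaining of the previously established facts.
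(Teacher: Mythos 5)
Your proof is correct and follows essentially the same route as the paper, which likewise obtains the containment from the inclusion of ``simple \& non-strict'' into ``non-strict'' combined with the dilation transformation to proper graphs, and obtains strictness from Corollary~\ref{cor:simple<non-strict}. Your closing observation is also accurate: the witness in Lemma~\ref{lemma:ft-non-strict-simple} is already a proper graph, so it certifies strictness directly, which is implicitly the basis of the paper's argument as well.
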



Finally, the fact that there is a reachability-preserving transformation from ``non-strict'' to ``strict'' (the saturation technique), and some reachability graphs from ``simple \& strict'' are unrealizable in ``non-strict'' (by Lemma~\ref{lemma:ft-strict-non-strict}), we also have

\begin{corollary}
  \label{cor:nstrict<strict}
  ``Non-strict'' $\precneq$ ``strict''.
\end{corollary}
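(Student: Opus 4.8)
The plan is to verify the two conditions packaged into the relation $\precneq$: first the containment ``Non-strict'' $\preceq$ ``strict'', and then its strictness. For the containment, I would invoke the saturation technique directly. Given any temporal graph $\G$ considered in the non-strict setting, that theorem produces a temporal graph $\mathcal{H}$ on the same vertex set, considered in the strict setting, whose reachability graph is identical to $\closure(\G)$; indeed, the transformation replaces each snapshot by its own path-based transitive closure, and the two implications established in its proof show that non-strict journeys in $\G$ correspond exactly to strict journeys in $\mathcal{H}$. Hence every reachability graph realizable in the non-strict setting is realizable in the strict setting, which is precisely ``Non-strict'' $\preceq$ ``strict''.

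For strictness, I would exhibit a strict reachability graph that is unrealizable in the non-strict setting. Lemma~\ref{lemma:ft-strict-non-strict} already supplies such a witness: it gives a graph in the ``simple \& strict'' setting whose reachability graph cannot be obtained from any graph in the ``non-strict'' setting. Since simple graphs are a particular case of non-simple graphs, this witness also lives in the (general) ``strict'' setting, so its reachability graph is a strict reachability graph that no non-strict graph can reproduce. Combined with the containment above, this yields ``Non-strict'' $\precneq$ ``strict''.

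There is essentially no genuine obstacle here beyond bookkeeping, since the real content lies in the saturation theorem and in Lemma~\ref{lemma:ft-strict-non-strict}, both already proved. The only points requiring a moment of care are, on the one hand, checking the \emph{direction} of the containment — one must confirm that the saturation transformation maps the non-strict side (the side being contained) into the strict side, and not the reverse — and, on the other hand, the observation that the separating example of Lemma~\ref{lemma:ft-strict-non-strict}, although stated for the simple subcase, transfers verbatim to the general strict setting by inclusion of settings. Once these two inclusions are in place, the corollary follows immediately.
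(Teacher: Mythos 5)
Your proposal is correct and matches the paper's own argument exactly: the containment ``non-strict'' $\preceq$ ``strict'' via the saturation technique, and strictness via Lemma~\ref{lemma:ft-strict-non-strict} together with the inclusion of the ``simple \& strict'' setting in the general ``strict'' setting. The two points you flag as needing care (the direction of the transformation and the transfer of the witness by inclusion) are indeed the only non-trivial bookkeeping, and you handle both as the paper does.
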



A summary of the relations is shown in Figure~\ref{fig:summary}, where
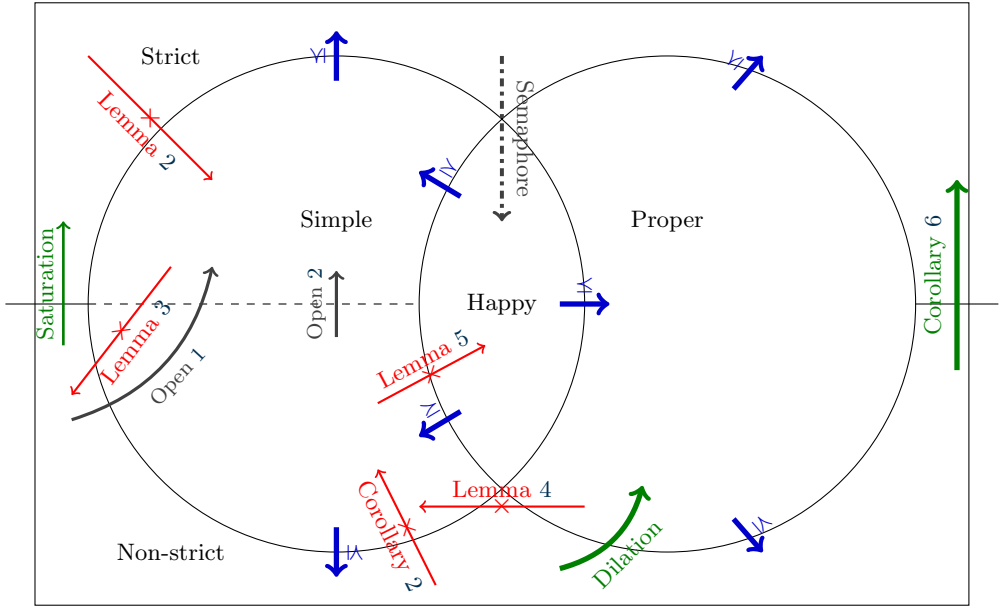
\begin{figure}[h]
\centering
\colorlet{circle edge}{blue!50}
\colorlet{circle area}{blue!20}

\tikzset{
  filled/.style={fill = circle area, draw=circle edge, thick},
  outline/.style={draw=circle edge, thick}
}

\begin{tikzpicture}[scale=1.1]

  \draw (0,0) circle[radius=3cm]
  (0:4cm) circle[radius=3cm];
  \tikzstyle{every node}=[font=\small]
  \tikzstyle{separation} = [red,->]
  \tikzstyle{saturation} = [black!50!green,line width=1pt,->]
  \tikzstyle{inclusion} = [black!20!blue,line width=2pt,->]
  \tikzstyle{semaphore} = [darkgray, line width=1.5pt, dashdotted, ->]
  \tikzstyle{splitting} = [black!50!green, line width=2pt,->]
  \tikzstyle{question} = [darkgray, line width=1.2pt,->]

  \node [draw,fit=(current bounding box),inner sep=7mm] (frame){};

  \node at (0:2cm) {Happy};
  \node at (0, 1) {Simple};
  \node at (4, 1) {Proper};
  \node at (-2, 3) {Strict};
  \node at (-2,-3) {Non-strict};

  \draw (-4,0) -- (-3,0);
  \draw[dashed] (-3,0) -- (1,0);
  \draw (7,0) -- (8,0);

  \tikzstyle{every path}=[thick]
  \draw (-3,3) edge[separation] node[sloped]{\large $\times$} node[below, sloped]{Lemma~\ref{lemma:ft-strict-simple}} (-1.5,1.5);
  \draw (3,-2.45) edge[separation] node[sloped]{\large $\times$} node[above]{Lemma~\ref{lemma:ft-non-strict-simple}} (1,-2.45);
  \draw (-2,0.45) edge[separation] node[sloped]{\large $\times$} node[sloped, below]{Lemma~\ref{lemma:ft-strict-non-strict}} (-3.2,-1.1);
  \draw (0.5,-1.2) edge[separation] node[sloped]{\large $\times$} node[sloped,above]{Lemma~\ref{lemma:ft-non-strict-simple-happy}} (1.8,-0.5);

  \draw (1.2,-3.4) edge[separation] node[sloped]{\large $\times$} node[sloped,below]{\small Corollary~\ref{cor:simple<non-strict}} (0.5,-2);

  \draw (2,3) edge[semaphore] node[above, sloped] {Semaphore} (2,1);
  \draw (2.7,-3.2) edge[splitting, bend right] node[below, sloped] {Dilation} (3.7,-2.2);
  \draw (-3.3,-0.5) edge[saturation] node[sloped, above]{Saturation} (-3.3,1);

  \draw (7.5,-0.8) edge[splitting] node[sloped,above]{\small Corollary~\ref{cor:nstrict<strict}} (7.5,1.5);

  \draw (0,2.7) edge[inclusion] node[above, sloped]{$\preceq$} (0,3.3);
  \draw (0,-2.7) edge[inclusion] node[above, sloped]{$\preceq$} (0,-3.3);
  \draw (4.8,2.6) edge[inclusion] node[above, sloped]{$\preceq$} (5.15,3);
  \draw (4.8,-2.6) edge[inclusion] node[above, sloped]{$\preceq$} (5.15,-3);
  \draw (2.7,0) edge[inclusion] node[above, sloped]{$\preceq$} (3.3,0);
  \draw (1.5,1.3) edge[inclusion] node[above, sloped]{$\succeq$} (1,1.6);
  \draw (1.5,-1.3) edge[inclusion] node[above, sloped]{$\succeq$} (1,-1.6);

  \draw (0,-0.4) edge[question] node[above,sloped]{\footnotesize Open~\ref{open:simple-nstrict<simple-strict}} (0,0.4);
  \draw (-3.2,-1.4) edge[bend right, question] node[below,sloped]{\footnotesize Open~\ref{open:nstrict<simple-strict}} (-1.5,0.45);

\end{tikzpicture}


\vspace{-5pt}
\caption{\label{fig:summary}Separations, transformations, and
  inclusions among settings.}
\end{figure}
green thick edges represent the transformations that are
support-preserving, green thin edges represent transformations that
are reachability-preserving, red edges with a cross represent separations
(i.e. the impossibility of such a transformation), black dashed edges
represent the induced-reachability-preserving transformation to happy
graphs. Finally, inclusions of settings resulting from containment of
graph classes are depicted by short blue edges.
Some questions remain open. In particular,

\begin{open}
  \label{open:nstrict<simple-strict}
  Does ``non-strict'' $\preceq$ ``simple \& strict''? In other words, is there a reachability-preserving transformation from the former to the latter?
\end{open}

By Lemma~\ref{lemma:ft-strict-non-strict}, we know that both settings are not equivalent, but are they comparable? If not, a similar question holds for ``simple \& non-strict'':

\begin{open}
  \label{open:simple-nstrict<simple-strict}
  Does ``simple \& non-strict'' $\preceq$ ``simple \& strict''? In other words, is there a reachability-preserving transformation from the former to the latter?
\end{open}

To conclude this section, Figure~\ref{fig:order-temporal-graphs} depicts a hierarchy of the settings ordered by the above relation $\preceq$; i.e. by the sets of reachability graph they can achieve.
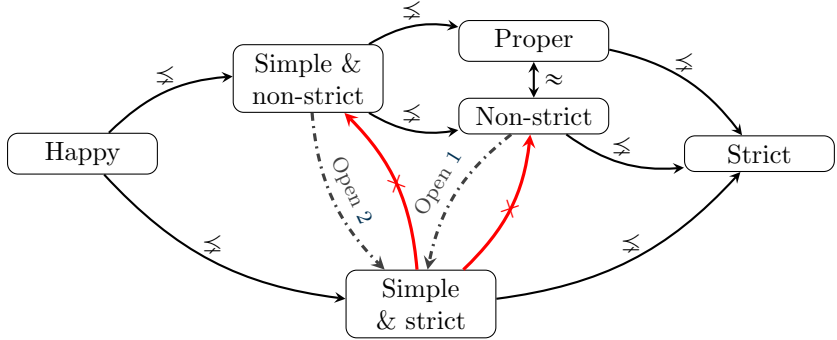
\begin{figure}[h]
  \centering
  \begin{tikzpicture}[scale=1]
  \tikzstyle{separation} = [red,line width=1.2pt,->]
  \tikzstyle{question} = [darkgray, dashdotted, line width=1.2pt,->]

  \tikzset{vertex/.style={draw, rectangle,
           rounded corners, draw=black, text centered, text width=5em}}

  \node[vertex] (happy) at (-0.5,0) {Happy};
  \node[vertex] (simple_nstrict) at (2.5, 1) {Simple \& non-strict};
  \node[vertex] (nstrict) at (5.5, 0.5) {Non-strict};
  \node[vertex] (proper) at (5.5, 1.5) {Proper};
  \node[vertex] (strict) at (8.5,0) {Strict};
  \node[vertex] (simple_strict) at (4,-2) {Simple \& strict};

  \tikzstyle{every path}=[thick]
  \begin{scope}[->,>=stealth,bend angle=20]
    \draw (happy) edge[bend left] node[above] {$\precneq$} (simple_nstrict);
    \draw (happy) edge[bend right] node[above] {$\precneq$} (simple_strict);
    \draw (simple_nstrict) edge[bend left, bend angle=5] node[above] {$\precneq$} (proper);
    \draw (simple_nstrict) edge[bend right, bend angle=5] node[above] {$\precneq$} (nstrict);
    \draw (proper) edge[bend left, bend angle=5] node[above] {$\precneq$} (strict);
    \draw (nstrict) edge[bend right, bend angle=5] node[above] {$\precneq$} (strict);

    \draw (nstrict) edge[<->] node[right] {$\approx$} (proper);
    \draw (simple_strict) edge[bend right] node[above] {$\precneq$} (strict);

    \draw (simple_strict) edge[bend right,separation] node[sloped]{\large $\times$} (simple_nstrict);
    \draw (simple_nstrict) edge[bend right,question] node[above,sloped] {\footnotesize Open~\ref{open:simple-nstrict<simple-strict}} (simple_strict);
    \draw (simple_strict) edge[bend right,separation] node[sloped]{\large $\times$} (nstrict);
    \draw (nstrict) edge[bend right,question] node[above,sloped]{\footnotesize Open~\ref{open:nstrict<simple-strict}} (simple_strict);
  \end{scope}

\end{tikzpicture}


  \caption{Ordering of temporal graph settings in terms of reachability graphs.}
  \label{fig:order-temporal-graphs}
\end{figure}

\section{Strengthening existing results to happy graphs}
\label{sec:happy}

From the previous section, happy graphs
are the least expressive setting. In this section, however, we argue that
they remain expressive enough to strengthen existing negative results
for at least two well-studied problems. First, we show that the construction
from~\cite{AF16} can be made happy, which implies that $o(n^2)$-sparse
spanners do not always exist in happy graphs. We also show that the
reduction from clique to temporal component from~\cite{BF03} can be made
happy, which implies that temporal component is NP-complete
even in happy graphs (for both open and closed components).
Finally, to further motivate studies on happy graphs, we list a few open questions.

\subsection{\textsc{Temporal Component} remains hard}
\label{sec:maxcomponent}

The temporal component problem can be defined as follows in temporal graphs.

\begin{problem}{\textsc{Temporal Component}}
  Input: A temporal graph $\G=(V,E,\lambda)$, an integer $k$\\
  Output: Is there a set $V'\subseteq V$ of size $k$ in $\G$ such that all vertices of $V'$ can reach each other by a temporal path?
\end{problem}

Due to the non-transitive nature of reachability, two versions
are typically considered, depending on whether
the vertices of $V'$ can rely (open version) or not (closed version)
on vertices outside of $V'$ for reaching each other.

In~\cite{BF03}, Bhadra and Ferreira show that both versions of the
problem are NP-complete. Interestingly, although the authors consider
a strict setting in the paper, their construction works indistinctly
for both the strict or the non-strict setting. However, it is neither
proper nor simple. In this section, we explain how to adapt their
construction to happy graphs. Our reduction works for both the open
and closed version, for the simple fact that it produces an instance
where the maximum open and closed components are the same.

Let $(G,k)$ be an instance of the clique problem, where $G$ is a
static undirected graph and $k$ some integer, the reduction
in~\cite{BF03} transforms $G$ into a temporal graph $\G$ as follows.
The first step of the transformation corresponds to a simplified
version of the semaphore technique, where two auxiliary vertices are
created for each pair of neighbors $u$ and $v$ in $G$, such that $u$
can reach $v$ in $\G$ through one of these vertices and $v$ can reach
$u$ through the other, using labels $2$ on the first edge and $3$ on
the second (on both sides). Observe that two adjacent semaphores
have labels which are not proper. The second step is to connect all
pairs of auxiliary vertices $x$ and $y$ using an edge $xy$ with two
labels~$1$ and~$4$ for each pair (thus $\G$ is not simple). The
purpose of these contacts is to make all auxiliary vertices reachable
from each other, and to create journeys between each auxiliary vertex
and each original vertex (both ways). As a result, an SCC of size
$2m + k$ exists in $\G$ (where $m$ is the number of edges of~$G$) if
and only if a clique of size $k$ exists in $G$.

\begin{theorem}
  \textsc{Temporal Component} is NP-complete in happy graphs.
\end{theorem}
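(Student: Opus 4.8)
The plan is to prove the two standard directions separately: membership in NP, which is routine, and NP-hardness, which is where the work lies and where the adaptation to the happy setting must be carried out. Membership is immediate and I would dispatch it first: a candidate component $V'\subseteq V$ of size $k$ is a polynomial-size certificate, and for every ordered pair $(u,v)\in V'\times V'$ one can decide in polynomial time whether a temporal path from $u$ to $v$ exists (for the closed version, after restricting the host graph to $V'$; for the open version, using all of $V$). Since happy graphs are a special case of general temporal graphs, membership holds \emph{a fortiori}.

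For hardness I would start from the reduction recalled just above, mapping a clique instance $(G,k)$ to a temporal graph whose maximum temporal component has size $2m+\omega(G)$, where $m=|E(G)|$ and $\omega(G)$ is the clique number. As noted, that graph is not happy for two reasons: adjacent adjacency-semaphores reuse the labels $2$ and $3$ (non-properness), and each highway edge carries the two labels $1$ and $4$ (non-simpleness). I would repair both defects locally, exactly in the spirit of the semaphore technique (Theorem~\ref{th:semaphore}) and of the tilting used there. First, to recover properness, I would fix a proper edge-colouring of the footprint (Vizing) and tilt every nominal label by a multiple of a small $\epsilon$ according to the colour of its edge; taking $\epsilon$ small enough keeps the four nominal values strictly ordered as $1<2<3<4$, so the separation between the ``early'' highway phase ($\approx 1$), the semaphore phase ($\approx 2,3$), and the ``late'' highway phase ($\approx 4$) is preserved. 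Second, to recover simpleness, I would replace each doubly-labelled highway contact $(\{x,y\},1)$ and $(\{x,y\},4)$ by a single-label semaphore gadget realising $x\leftrightarrow y$ within the corresponding time window, precisely as in Theorem~\ref{th:semaphore}. The resulting graph $\mathcal H$ is then both proper and simple, i.e.\ happy, and the construction is clearly polynomial.

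The core of the argument is to check that these local repairs preserve the reachability relation driving the reduction. Because the tilts respect the order $1<2<3<4$, every journey used in the original analysis survives --- an original vertex still reaches every auxiliary via ``ascend at $2$, cross late at $4$'', every auxiliary still reaches every original via ``cross early at $1$, descend at $3$'', all auxiliaries remain pairwise mutually reachable, and two original vertices remain mutually reachable iff they are adjacent in $G$ --- and, crucially, no new journey appears: the time-$4$ phase still lies strictly after the descent edges, so a vertex that has ascended can never climb back down to a non-neighbouring original. Consequently the mutual-reachability graph induced on the original and highway-auxiliary vertices is unchanged, and a set of original vertices is pairwise mutually reachable exactly when it is a clique of $G$. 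It then remains to account for the new auxiliary vertices created by the repairs: since their reachability depends only on the fixed gadget structure and never on any clique, the number of them that are mutually reachable with the whole core is a constant $c$ depending only on $G$. Hence the maximum temporal component of $\mathcal H$ has size $C+\omega(G)$ for the fixed constant $C=2m+c$, and $\mathcal H$ admits a temporal component of size $C+k$ iff $G$ has a clique of size $k$. The same witness is realised by journeys that stay inside the component, so the maximum open and closed components coincide, and the reduction works verbatim for both versions.

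The step I expect to be the main obstacle is precisely this last bookkeeping of the auxiliary vertices. It is not enough to argue that they contribute a clique-independent additive constant; I must also rule out that some \emph{auxiliary-only} set forms a mutual-reachability clique larger than the clique-component $C+\omega(G)$, which would swamp the clique signal and make the maximum component independent of $\omega(G)$. Controlling this amounts to checking the $\epsilon$-orderings inside each gadget carefully (e.g.\ a gadget attached to a late, time-$4$ highway contact cannot descend to any original, so it cannot join the core in both directions) and to keeping the number of auxiliary vertices per highway contact minimal, so that every sufficiently large mutually-reachable set is forced to meet the original vertices and hence to encode a clique of $G$. Getting this invariant exactly right, so that the clique-component is the \emph{unique} maximum, is the delicate part of the proof; the rest follows the structure of~\cite{BF03} and of Theorem~\ref{th:semaphore}.
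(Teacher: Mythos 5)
Your NP-membership argument is routine and correct, and your properness repair (tilting nominal labels by colour-multiples of $\epsilon$ under a Vizing edge-colouring) is exactly what the paper does. For simpleness, however, you take a genuinely different route, and this is where the proposal has a real gap. The paper never adds vertices: it invokes Lemma~B.1 of~\cite{AF16} to decompose the clique on the $2m$ auxiliary vertices into Hamiltonian paths, keeps four edge-disjoint ones, and relabels two of them near time $1$ (ascending labels toward a chosen vertex $u$ along $p_1$, then ascending from $u$ along $p_2$) and two near time $4$; the vertex set is untouched, so the exact correspondence ``component of size $2m+k$ iff clique of size $k$'' from~\cite{BF03} survives verbatim, for both open and closed versions. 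You instead replace each doubly-labelled highway contact by a second layer of semaphore gadgets, adding $\Theta(m^2)$ fresh vertices, and you must then account for how many of these join, or themselves form, large mutually-reachable sets.

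That accounting is precisely where your proof stops, and it is a genuine gap rather than a deferred detail: your reduction maps $(G,k)$ to $(\mathcal{H}, C+k)$ with $C=2m+c$, but $c$ is never determined, so the output instance is not even fully specified; and you cannot lean on Theorem~\ref{th:semaphore} here, since it only guarantees induced-reachability equivalence on the \emph{original} vertices and says nothing about components containing its auxiliary vertices. The good news is that the gap is closable, and in your favour. Consider a gadget vertex $w$ on the side $x \to w \to y$ of a time-$1$ gadget: its only incident labels are $1-\epsilon_c$ and $1+\epsilon_c$, and since no edge of $\mathcal{H}$ is traversable strictly before time $\approx 1$ (ascents are at $\approx 2$, all other highway labels are at $\approx 1$ or $\approx 4$ and cannot be composed twice around time $1$), essentially only $x$ can ever reach $w$; symmetrically, a time-$4$ gadget vertex can reach essentially only one endpoint, as nothing departs after time $\approx 4$. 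Hence every new vertex lies only in mutually-reachable sets of size $O(1)$, so $c=0$ and the maximum component (open and closed alike) remains exactly $2m+k$. With that lemma spelled out, your double-semaphore reduction works; the paper's Hamiltonian-path relabeling buys exactly the avoidance of this bookkeeping, at the modest price of invoking the decomposition lemma and requiring $m\geq 4$.
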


\begin{proof}
  Our proof consists of making the transformation from~\cite{BF03}
  both simple and proper, while preserving the size of the maximum
  component within (which is the same for the open and closed versions).
  First, observe that the non-proper labels of the
  semaphores in $\G$ can easily be turned into proper labels by
  \emph{tilting} the labels in the same way as explained in the
  \emph{semaphore technique} (Theorem~\vref{th:semaphore}), namely, by
  coloring properly the edges of the footprint of $\G$, and adding the
  corresponding multiple of $\epsilon$ to every label. Since the
  quantity added to each label is less than $1$, the reachability
  among original vertices, and between original and auxiliary
  vertices, is unaffected. The conversion of the labels $1$ and $4$
  between auxiliary vertices is slightly more complicated. These
  vertices in $\G$ form a clique, each edge of which have labels $1$
  and $4$. First, using Lemma B.1 in~\cite{AF16}, any clique of $2m$
  vertices may be decomposed into $m$ hamiltonian paths. We need only
  $4$ such paths for the construction, which is guaranted as soon as
  $m \ge 4$ (our adaptation does not need to hold for smaller graphs
  in order to conclude that the problem is NP-complete). Thus, take
  four edge-disjoint hamiltonian paths among auxiliary vertices. We
  will use two of them, say $p_1$ and $p_2$ to replace the contacts
  having label $1$ (the same technique applies for label $4$). Pick a
  vertex $u$ in $p_1$ and assign time labels to $p_1$ so that all
  vertices in $p_1$ can reach $u$ through ascending labels (towards
  $u$). Then proceed similarly in $p_2$ with ascending labels
  \emph{from} $u$ towards all the vertices of $p_2$. Choose these
  labels so that they remain sufficiently close to $1$ and do not
  interfere with the rest of the construction. Proceed similarly for
  the two other hamiltonian paths with respect to label $4$. The
  resulting construction is happy and preserves the journeys between
  auxiliary vertices, while preserving the composability of journeys
  with the rest of the construction.
\end{proof}

\subsection{Happy graphs do not always admit $o(n^2)$-sparse spanners}
\label{sec:counterexample}

In~\cite{AF16} (Theorem~3.1), Axiotis and Fotakis construct an infinite family of temporal graphs in the ``simple \& non-strict'' setting that does not admit a $o(n^2)$-sparse spanner. The goal of this section is to show that their construction can be strengthened to happy graphs.

The construction $\mathcal{G}$ in~\cite{AF16} consists of three parts of $n$ vertices each (thus $N=3n$ vertices in total), namely a clique of vertices $A=a_1,...,a_n$; an independent set $H=h_1,...,h_n$; and a set $M=m_1,...,m_n$ of additional vertices. The idea is to make every edge of $A$ critical to provide connectivity among some vertices of $H$, so that removing any of these edge breaks temporal connectivity and every spanner thus contains $\Theta(n^2)$ edges. The purpose of the vertices in $M$ is only to make the rest of $\mathcal{G}$ temporally connected without affecting these relations between $A$ and $H$. In this construction, every edge receives a single label, so $\G$ is already simple. However, the inner labeling of the clique $A$ is not proper. We claim that this labeling can be made proper without affecting the main properties of the construction. The following statement is identical to Theorem~3.1 in~\cite{AF16}, except that the adjective happy is inserted.

\begin{theorem}
  \label{th:spanners}
For any even $n\geq 2$, there is a happy connected temporal graph with $N=3n$ vertices, $\frac{n(n+9)}{2}-3$ edges and lifetime at most $\frac{n(n+5)}{2}-1$, so that the removal of any subset of $5n$ edges results in a disconnected temporal graph.
\end{theorem}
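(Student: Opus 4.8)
The plan is to take the existing construction of Axiotis and Fotakis from~\cite{AF16}, which already has all the desired structural properties (it is simple, temporally connected, has the stated vertex and edge counts, and is robust against the removal of any $5n$ edges), and to show that its \emph{only} defect---the non-proper labeling inside the clique $A$---can be repaired without disturbing any of the reachability relations that the construction relies upon. Since the original graph is already simple, properness is the single obstacle between it and happiness, so the entire argument reduces to a relabeling of the clique edges.

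First I would recall precisely why the clique labeling in~\cite{AF16} fails to be proper: adjacent edges of $A$ share labels, which is exactly what properness forbids. The key observation I would isolate is that the role of the clique edges in the construction is purely to realize certain one-directional reachabilities between vertices of $A$ (and, through them, to the vertices of $H$), and that these reachabilities are governed by the \emph{relative order} of labels along journeys, not by their exact integer values. This is the same principle exploited in the dilation and semaphore techniques earlier in the paper. I would therefore replace the problematic integer labels inside $A$ by \emph{tilted} values, adding a small multiple $c_e \epsilon$ of a fixed $\epsilon < 1/(\Delta+1)$ to each clique edge $e$, where $c_e$ is the color of $e$ in a proper edge-coloring of the clique guaranteed by Vizing's theorem. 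Because each tilt is strictly less than $1$, no label crosses an integer boundary, so the order of any two labels that originally differed by at least $1$ is preserved, while adjacent edges that originally collided now receive distinct values---yielding properness.

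The main obstacle, and the step deserving the most care, is verifying that the tilting does \emph{not} create or destroy any journey that matters for the spanner lower bound. Concretely, I would argue two directions: (i) every journey that existed in the original construction still exists after tilting (the relevant labels along it were strictly increasing by integer gaps, and tilts by less than $1$ cannot reverse such gaps, so strictness/monotonicity is maintained); and (ii) no \emph{new} journey is introduced between vertices that were previously non-adjacent in the reachability graph, since two labels that were originally equal and are now separated by a tilt of at most $\Delta\epsilon < 1$ still both lie strictly between the same pair of integer time steps, so they cannot be chained with labels from an adjacent integer layer in a way the original could not. Properness additionally makes the strict/non-strict distinction vanish, so I need not separately track both path models. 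The delicate bookkeeping is to confirm that the tilts interact correctly with the boundary labels connecting $A$ to $H$ and $M$; here I would choose the tilts small enough (relative to the integer spacing used by the original construction) that the clique layer remains confined to its own time window and composes with the rest of the graph exactly as before.

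Finally, I would tally the parameters to confirm the statement. The vertex set is untouched, so $N = 3n$ remains; the edge set is unchanged as a set (we only relabel), so the edge count stays $\frac{n(n+9)}{2}-3$; the lifetime may grow only by the fractional tilts, which after renormalizing to integers inflates it by at most a constant factor absorbed into the stated bound $\frac{n(n+5)}{2}-1$ (or, more carefully, the tilts can be kept within the existing integer slack). Crucially, since the relabeled graph has exactly the same footprint and the same reachability relation among the vertices of $A$ and $H$ as the original, the argument of~\cite{AF16} that every edge of $A$ is critical for connectivity transfers verbatim: removing any $5n$ edges must delete at least one clique edge essential to some $H$-pair, breaking temporal connectivity. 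This establishes the theorem in the happy setting.
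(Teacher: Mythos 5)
There is a genuine gap, and it sits exactly at the step you flagged as needing the most care. Your claim (i) rests on the premise that the journeys of the original construction have ``labels strictly increasing by integer gaps,'' but this is false precisely for the edges you are relabeling: in the construction of~\cite{AF16}, the clique $A$ is decomposed into $n/2$ Hamiltonian paths $p_1,\dots,p_{n/2}$, and \emph{all} edges of $p_i$ carry the same label $i$. The critical journey from $h_{2i-1}$ to $h_{2i}$ traverses the entire path $p_i$ \emph{non-strictly}, over equal labels. Once the graph is proper, all journeys are strict, so that journey survives only if the perturbed labels are \emph{monotone along the path in the direction of travel}. A tilt by $c_e\epsilon$ from an arbitrary Vizing edge-coloring gives an essentially arbitrary, non-monotone order of labels along each $p_i$, so in general the unique $h_{2i-1}\to h_{2i}$ journey is destroyed and the graph is no longer temporally connected; your argument that ``tilts by less than $1$ cannot reverse integer gaps'' is vacuous here because the gaps in question are zero. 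This is exactly the pitfall the paper describes around the hasty claims in the literature: a local perturbation of the AF16 labeling does not in general preserve its reachability (for some of the constructions it provably cannot, without multiple labels per edge).

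What the paper's proof supplies, and your proposal is missing, is the structural observation that makes a repair possible at all: although each constant-labeled path $p_i$ can be traversed in both directions non-strictly, only \emph{one} direction is actually needed, because $h_{2i}$ can reach $h_{2i-1}$ via temporal paths outside the clique. Granting this, one assigns to each $p_i$ a strictly increasing sequence of labels (shifting all larger labels of the graph accordingly), which keeps the graph simple, makes it proper, preserves the one needed direction, and leaves all other temporal paths intact. Note that this is also where the quadratic lifetime bound $\frac{n(n+5)}{2}-1$ comes from: each path's single label is expanded into $n-1$ increasing labels, not merely ``fractional tilts absorbed into the bound'' as you suggest. Your tilting idea could in fact be salvaged, but only by choosing the tilts monotone along each path (e.g., tilt $j\delta$ on the $j$-th edge of each $p_i$), which after renormalization is the same relabeling as the paper's --- and even then the proof stands or falls on the one-directionality observation, which must be argued from the structure of the construction, not assumed.
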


\begin{proof}[Proof (sketch)]
  The complete construction from~\cite{AF16} is not presented here in detail. However, the fact that it can be made proper relies on a simple observation. In~\cite{AF16}, the clique $A$ is decomposed into $\frac{n}{2}$ hamiltonian paths $p_1, p_2, ..., p_{n/2}$, each of which is assigned label $i$.
For every path $p_i$, vertices $h_{2i-1}$ and $h_{2i}$ are connected to the endpoints of $p_i$ (one on each side), and the main requirement is that this path is the \emph{only} way for $h_{2i-1}$ to reach $h_{2i}$. Interestingly, although every path $p_i$ is non-strict in~\cite{AF16} and thus could be travelled in both directions, it turns out that only one direction is needed, because $h_{2i}$ can reach $h_{2i-1}$ (for every $i$) using temporal paths outside of the clique. Therefore, our adaptation consists of assigning to every path $i$ a strictly increasing sequence of labels, while shifting all the larger labels of the graph appropriately, so that all other temporal paths are unaffected and $\G$ becomes proper. The full proof would require a complete description of the construction in~\cite{AF16}, which would be identical up to the above change.
\end{proof}

\subsection{Further questions}

To conclude this section, we state a few open questions related to
spanners in happy graphs. The first question is structural,
namely,

\begin{open}Do happy cliques always admit $O(n)$-sparse spanners? If so, do they admit spanners of size $2n-3$?
\end{open}

It was shown in~\cite{CPS21} that happy cliques (or alternatively, all temporal cliques in the non-strict setting) always admit spanners of size $O(n \log n)$, but no counterexamples were found so far that rule out size $O(n)$.

On the algorithmic side, two independent results establish that \textsc{Min-Label Spanner} is hard in temporal graphs~\cite{AF16,AGMS17}. However, the proofs in these papers rely on constructions which are not happy, and it is not clear that these constructions can be strengthened to happy graphs in a similar way as the above problems. Thus,

  \begin{open} Is \textsc{Min Spanner} tractable in happy graphs?
  \end{open}

As already explained, both the \minedge\ and \minlabel\ versions of the problem coincide in simple graphs (and thus in happy graphs), which makes them a single problem.

\section{Concluding remarks}
\label{sec:conclusion}

In this paper, we explored the impact of three particular aspects of
temporal graphs: \emph{strictness}, \emph{properness}, and
\emph{simpleness}. Comparing their expressivity in terms of reachability
graphs, we showed that these aspects really matter and that
separations exist between the expressivity of some
settings, while others can be shown equivalent through transformations.
Then, we focused on the simplest model (happy graphs), where
all these distinctions vanish, and showed that this model still
captures interesting features of general temporal graphs. Our results
imply a few striking facts, such as the fact that the ``proper'' setting is as
expressive as the ``non-strict'' setting. Some relations remain
unknown, in particular, it is open whether the ``non-strict'' setting
is comparable to ``simple \& strict'' setting. Finally, despite
their extreme simplicity, several basic questions remain open on happy
graphs, which we think makes them a natural target for further
studies. We conclude by stating a few questions of more general scope,
related to the present paper.

\begin{open}[Realizability of a reachability graph]
  \label{open:realizability}
  Given a static digraph, how hard is it to decide whether it can be realized as the reachability graph of a temporal graph?
\end{open}

The structural analog of this question could be formulated as follows

\begin{open}[Characterization of the reachability graphs]
  \label{open:characterization}
  Characterize the set of static directed graphs that are the reachability graphs of some temporal graph.
\end{open}

Questions~\ref{open:realizability}
and~\ref{open:characterization} can be declined into several versions, one for each setting.
Finally, the work in this paper focused on \emph{undirected} temporal graphs.
It would be interesting to see if the expressivity of \emph{directed} temporal graphs shows similar
separations and transformations.

\begin{open}[Directed temporal graphs]
  \label{open:directed-separation}
  Does the expressivity of directed temporal graphs admit similar
  separations and transformations as in the undirected case?
\end{open}

{\footnotesize
\bibliographystyle{plain}
\bibliography{happy.bib}
}

\newpage
\appendix

\end{document}